\documentclass{article}

\usepackage{microtype}
\usepackage{graphicx}
\usepackage{subcaption}
\usepackage{booktabs} % for professional tables

\usepackage{hyperref}

\usepackage[final]{acl}
\usepackage{times}
\usepackage{latexsym}
\usepackage[T1]{fontenc}
\usepackage[utf8]{inputenc}
\usepackage{microtype}
\usepackage{inconsolata}
\usepackage{graphicx}
\usepackage{subcaption}

\usepackage{booktabs,longtable}

\usepackage{amsmath}
\usepackage{amssymb}
\usepackage{mathtools}
\usepackage{amsthm}
\usepackage{amsmath}
\usepackage{amssymb}
\usepackage{mathtools}
\usepackage{amsthm}
\usepackage{algorithm}
\usepackage{algorithmic}
\usepackage{booktabs}

\usepackage{float}
\theoremstyle{plain}
\newtheorem{theorem}{Theorem}[section]

\newtheorem{lemma}[theorem]{Lemma}
\newtheorem{corollary}[theorem]{Corollary}
\theoremstyle{definition}

\theoremstyle{remark}

\usepackage{amsmath, amssymb, amsthm}
\usepackage{mathtools}
\usepackage{hyperref}
\usepackage{url}

\theoremstyle{plain}

\newcommand{\cX}{\mathcal{X}}
\newcommand{\cY}{\mathcal{Y}}
\newcommand{\cU}{\mathcal{U}}
\newcommand{\cW}{\mathcal{W}}

\usepackage{amsfonts}

\begin{document}
\title{Conformal Privacy Auditing: Calibrated Re-identification Attacks with Statistical Guarantees}

\author{Shuo Huang$^{1}$ \quad Gholamreza Haffari$^{1}$ \quad Xingliang Yuan$^{2}$ \quad Ting Yu$^{3}$ \quad Lizhen Qu$^{1}$\thanks{\;Corresponding author.} \\
  $^{1}$Monash University \quad $^{2}$The University of Melbourne \\
  $^{3}$Mohamed bin Zayed University of Artificial Intelligence \\
  \texttt{\{shuo.huang1,gholamreza.haffari,lizhen.qu\}@monash.edu} \\
  \texttt{xingliang.yuan@unimelb.edu.au} \quad \texttt{ting.yu@mbzuai.ac.ae}}

\maketitle

% this must go after the closing bracket ] following \twocolumn[ ...

% This command actually creates the footnote in the first column listing the
% affiliations and the copyright notice. The command takes one argument, which
% is text to display at the start of the footnote. The \icmlEqualContribution
% command is standard text for equal contribution. Remove it (just {}) if you
% do not need this facility.

% Use ONE of the following lines. DO NOT remove the command.
% If you have no special notice, KEEP empty braces:
% Or, if applicable, use the standard equal contribution text:
% \printAffiliationsAndNotice{\icmlEqualContribution}

\begin{abstract}
Empirical identity leakage from released text is increasingly driven by attackers that combine large language models (LLMs) with auxiliary knowledge to link documents to individuals. Existing audits typically report success rates for specific attack pipelines but lack finite-sample statistical guarantees, while training-time protections such as differential privacy are difficult to translate into release-time decisions for individual natural-language documents. We introduce \emph{Conformal Privacy Auditing} (CPA), a distribution-free calibration framework that provides a statistical certificate of re-identification risk for each released document against LLM-empowered adversaries. CPA outputs a \emph{conformal ambiguity set} of candidate identities that is guaranteed to contain the true identity with user-chosen confidence under exchangeability, together with an interpretable leakage proxy derived from set size. CPA supports both logit-access and sampling-only attackers, enabling audits of open-source models and proprietary API models in a unified framework. Across multiple release benchmarks and attacker configurations, CPA achieves calibrated coverage and reveals sharp shifts in certified identifiability as auxiliary knowledge, LLM augmentation, and release mechanisms vary, providing a statistically grounded basis for reporting and comparing release-time linkage risk across attacker configurations, datasets, and release mechanisms alike.
\end{abstract}

\section{Introduction}
\label{sec:intro}

%Privacy issues are of increasing concern alongside the rapid growth of large language models (LLMs) and their applications~\citep{xin2025false}.
%Such applications are often data-intensive. In many areas, including healthcare and law, there is a strong need for organizations to release textual data, such as healthcare notes~\cite{lison2021anonymisation}, legal documents~\cite{deuber2023assessing}, and online posts~\cite{dou-etal-2024-reducing}, for research and product development without breaching the privacy of individuals.
Privacy concerns have intensified alongside the rapid growth of large language models (LLMs) and their widespread applications~\cite{xin2025false}. These applications are inherently data-intensive. In many domains, including healthcare and law, organizations face a strong need to release textual data, such as clinical notes~\cite{lison2021anonymisation}, legal documents~\cite{deuber2023assessing}, and online posts~\cite{dou-etal-2024-reducing}, to support research and product development. At the same time, they must ensure that releasing such data does not compromise the privacy of the individuals it describes. \par

A common practice for releasing sensitive textual data is redaction~\cite{pilan2022tab}, which replaces private information belonging to predefined categories with symbolic placeholders. However, as illustrated in Fig.~\ref{fig:wide_image}, combinations of seemingly innocuous attributes, such as ``37-year-old'', ``female'', ``nephrologist'', and ``Brisbane'', can remain in redacted texts and collectively serve as linkable signals~\cite{xin2025false}, enabling the re-identification of individuals who were intended to be anonymized. Recent studies~\cite{manzanares2024evaluating,staab2023beyond} further show that the widespread use of LLMs has effectively turned them into practical privacy adversaries, capable of extracting salient cues from text and integrating auxiliary information to substantially amplify re-identification risks, even for carefully redacted text.

%For document release, the practical question is no longer “does the text contain a name,” but rather:
%\emph{given plausible side information, how confidently can a modern (LLM-assisted) attacker narrow the document to a small set of candidate individuals or profiles?}

%The existing approaches for \textit{auditing} privacy risks of released texts are limited, especially in the presence of LLMs with strong reasoning capability. Among them, Differential Privacy (DP) provides strong theoretical guarantees by introducing carefully calibrated noise into data~\cite{dwork2006differential}. However, its worst-case guarantees are often overly conservative in practice, leading to substantial utility degradation, while still failing to fully capture the nuanced privacy risks encountered in real-world scenarios. Alternatively, empirical privacy protection methods rely on \textit{empirical} metrics, e.g. top-1 success, to measure the privacy risks. However, such risk evaluation methods do not provide theoretical insights so that there is limited understanding about to what extent the evaluation results may hold on unseen scenarios. 

Existing approaches for \emph{auditing} privacy risks in released textual data remain limited, particularly in the presence of LLMs with strong reasoning capabilities. Differential Privacy (DP) provides formal worst-case guarantees by injecting carefully calibrated noise into the data~\cite{dwork2006differential}. However, these guarantees are often overly conservative in practice, leading to significant utility degradation, while failing to capture the nuanced and context-dependent privacy risks observed in real-world deployments. In contrast, empirical privacy auditing methods quantify privacy leakage using attack-based metrics, such as re-identification success rates; such evaluations provide practical evidence of observed leakage but lack theoretical guarantees, leaving it unclear how well the measured risks generalize to unseen data or stronger attackers.

%Existing approaches fall into two camps, neither of which fully answers the release-decision question. On one hand, \emph{formal} protections such as differential privacy~\cite{dwork2006differential} (DP) offer worst-case guarantees at training time, but are often difficult to interpret for the risk posed by releasing individual documents. On the other hand, \emph{empirical} evaluations frequently rely on ad-hoc attack pipelines and report point estimates (e.g., top-1 success), providing little statistical support for what a curator should conclude from a finite audit set. In high-stakes release settings, it is precisely this lack of calibration that causes harm: a privacy report should be comparable across models and release
%mechanisms, and should remain meaningful when the attacker is a black-box LLM.

To address these challenges, we propose \textbf{Conformal Privacy Auditing} (CPA), a theoretically grounded \emph{black-box} framework for auditing \emph{re-identification} risks of individuals in released texts under a specified attacker model. CPA builds on the theory of Conformal Prediction~\cite{vovk2005algorithmic,bates2021riskcontrol}, a distribution- and model-agnostic framework for uncertainty quantification with formal statistical guarantees. As illustrated in Fig.~\ref{fig:wide_image}, given a released document and attacker background knowledge, CPA employs a score function to identify a \emph{certified ambiguity set} that covers the true target individual with a high probability, e.g., $90\%$, under the chosen attack model. The size of this ambiguity set serves as an interpretable measure of re-identification uncertainty---a large set (e.g., $10^6$ candidates) indicates high uncertainty for the attacker to correctly identify the individual. Importantly, CPA provides \textit{explicit statistical conditions} under which these guarantees hold, enabling principled interpretation and comparison of auditing outcomes across threat models.

\begin{figure*}[t]
    \centering
    \includegraphics[width=0.50\textwidth]{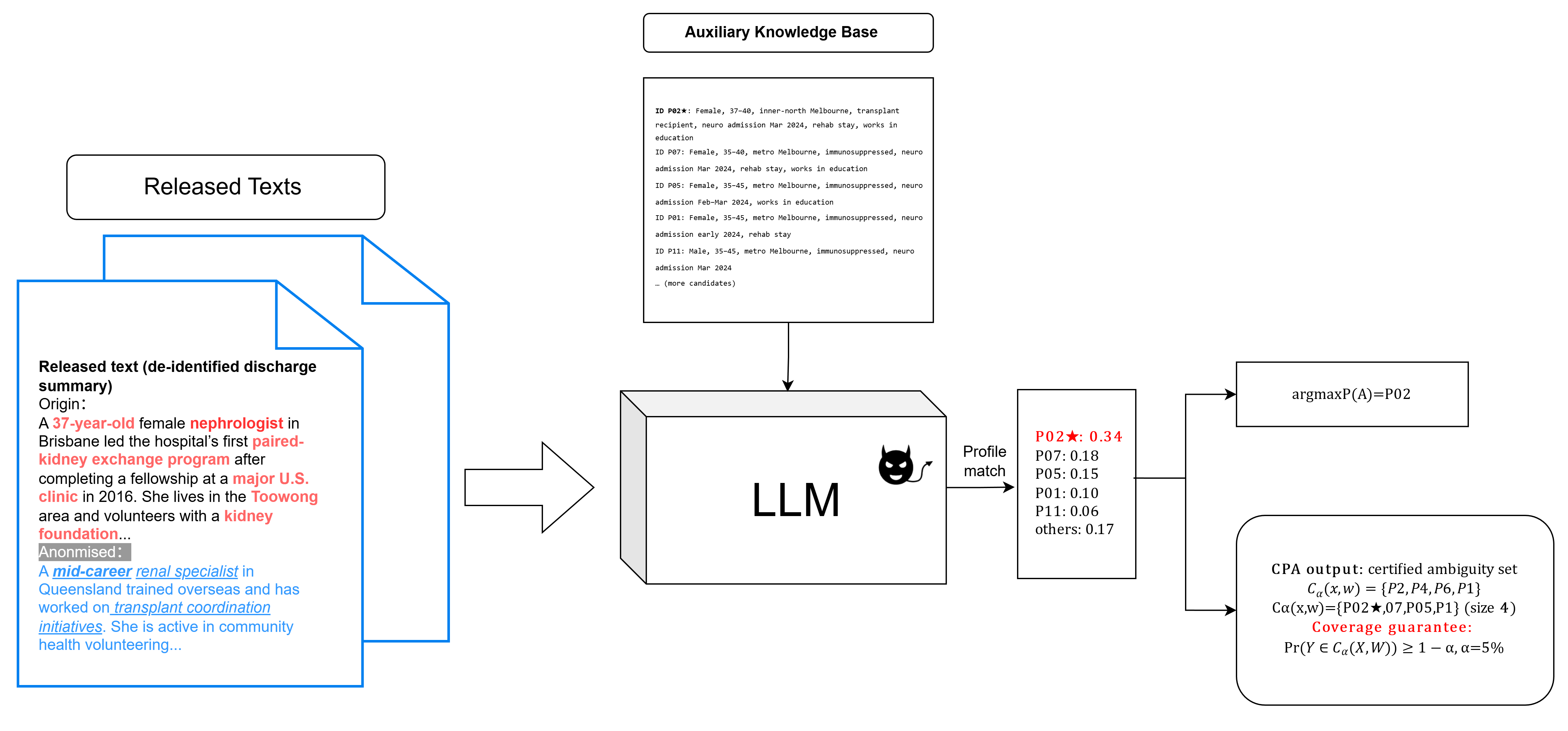}
    \caption{CPA auditing vs.\ top-1 prediction. A conventional attack pipeline (top) outputs a single point prediction; CPA (bottom) calibrates the same attacker into a conformal ambiguity set that contains the true identity with probability at least $1-\alpha$ under the declared threat configuration.}
    \label{fig:wide_image}
\end{figure*}
%\paragraph{Contributions.}
The contributions of this work are three-fold.
\begin{itemize}
  %  \item We formulate the auditing of re-identification risks through identity linkage from released texts as the problem of identifying \emph{conformal ambiguity sets} as privacy certificates using a nonconformity score function. 
   % \item We develop nonconformity score functions specific for LLMs and retrieval models by assuming access to only the log probabilities of candidates or similarity scores respectively, enabling black-box auditing of the corresponding open-source models and API-based models.
   % \item We show that CPA can effectively certify against a \emph{declared family} of attackers and provide statistical guarantee to the released dataset in various adversary settings.
    \item We formulate re-identification auditing of released texts as constructing \emph{conformal ambiguity sets} that serve as privacy certificates, and develop a theoretical framework specifying the statistical conditions under which these certificates provably hold.
    \item We design nonconformity scores for LLM and retrieval-based attackers requiring only candidate log-probabilities or similarity scores, enabling fully \emph{black-box} auditing of open-source models and proprietary API-based models without any internal or logit-level access.
    \item We demonstrate that CPA certifies privacy risks against a \emph{declared family} of attackers, with formal statistical guarantees across diverse release benchmarks, attacker families, and adversarial settings.

\end{itemize}

\section{Preliminaries: Conformal Prediction}
\label{sec:prelim-cp}
Conformal prediction (CP) turns arbitrary model scores into \emph{set-valued} predictions with distribution-free, finite-sample guarantees \citep{vovk2005algorithmic}. It requires no correctly specified probabilistic model; it relies only on \emph{exchangeability}: the joint distribution of $(Z_1,\dots,Z_{n+1})$ with $Z_i=(X_i,Y_i)$ is invariant under permutations of indices (i.i.d.\ sampling is sufficient), so that a test example is statistically indistinguishable from the $n$ examples of a held-out \emph{calibration} sample and ranks of calibration scores form valid quantiles.

Concretely, fix any trained (possibly black-box) predictive system and a \emph{nonconformity score} $s:\mathcal{X}\times\mathcal{Y}\to\mathbb{R}$, where larger values indicate that label $y$ is less compatible with input $x$. Given calibration scores $S_i=s(X_i,Y_i)$, split conformal sets the cutoff $\hat q_\alpha$ to the $\lceil(n+1)(1-\alpha)\rceil$-th smallest calibration score and returns the prediction set $C_\alpha(x)=\{y\in\mathcal{Y}:\ s(x,y)\le \hat q_\alpha\}$. A standard rank argument yields the \emph{finite-sample marginal coverage} guarantee
\begin{equation}
\Pr\!\big(Y_{n+1}\in C_\alpha(X_{n+1})\big)\ge 1-\alpha,
\end{equation}
which is distribution-free and holds for \emph{any} score function $s$ computed consistently on calibration and test examples \citep{vovk2005algorithmic}. In multiclass settings, adaptive prediction sets (APS) construct $s(x,y)$ from the model's ranked probabilities so that sets shrink on easy examples and expand on hard ones while preserving marginal coverage \citep{romano2020classification}.

\paragraph{Relation to threshold tuning on a hold-out set.}
Split CP resembles the familiar practice of tuning a classifier's decision threshold on held-out data, but differs in what is guaranteed. Ad-hoc threshold tuning targets an empirical operating point (e.g., a precision/recall trade-off), makes no finite-sample statement about future data, and inherits any miscalibration of the underlying score. Split CP instead specifies the exact order statistic whose use guarantees marginal coverage at level $1-\alpha$ for the next exchangeable example---for any score function, without parametric or calibration assumptions. In our setting, this is what turns an arbitrary attacker score into a statistically calibrated ambiguity certificate with finite-sample validity guarantees.

\section{Auditing Identity Linkage Risk}
\label{sec:setting}

\subsection{Running example: Release-time identity linkage auditing}
\label{sec:usage}

Consider a healthcare provider releasing de-identified clinical case summaries ($x$) who must ensure that an attacker cannot link them to specific patients using auxiliary background knowledge ($\mathcal{R}$), such as demographics or public registries. The provider uses CPA to simulate a realistic linkage attacker that extracts clues with LLMs and ranks potential matches against a candidate profile database. For each document, the audit produces a \emph{certified ambiguity set} $C_\alpha(x)$: a statistically guaranteed ``short-list'' of suspects containing the true patient with confidence $1-\alpha$. A small set (e.g., $|C_\alpha(x)|=3$) immediately signals that the attacker can narrow the identity to a tiny group, so the provider can withhold or further redact documents whose certified ambiguity falls below a minimum threshold $k$---directly mirroring government guidance that ties release decisions to numeric re-identification risk (e.g., the $0.09$ threshold used by the Information and Privacy Commissioner of Ontario, \citeyear{information2016identification}).
\subsection{Task Formulation}
\paragraph{Released documents and identities.}
We study \emph{identity linkage risk} for released text. Let $X\in\cX$ denote a released document (possibly anonymized or rewritten) and let $Y\in\cU$ denote the true identity of the subject. Here $\cU$ is a \emph{reference population} (e.g., a candidate profile database). The goal of auditing is not to \emph{protect} against an unbounded attacker, but to provide a \emph{quantitative, finite-sample certificate} of linkage risk under a declared attacker capability and budget.

\subsection{Threat Model}
\label{sec:threat-model}
Every CPA certificate is issued \emph{relative to a declared threat configuration}. Following standard practice in security research, we make each component of this configuration explicit below; Table~\ref{tab:knowledge} summarizes the knowledge ladder instantiated in our experiments, and Table~\ref{tab:threat-config} in Appendix~\ref{app:threat-config} maps every experimental suite to its full configuration. A notation table consolidating the symbols used in this and the following sections is provided in Appendix~\ref{app:notation} (Table~\ref{tab:notation}) for reference.

\paragraph{Attacker goal.}
Given a released document $x$, the attacker attempts to link $x$ to the true subject identity $Y$ within the reference population $\cU$.

\paragraph{Side information and tooling as a knowledge level.}
Privacy leakage is attacker-relative: an attacker with access to a detailed candidate profile database and strong search tooling can link text that would be safe under a weaker attacker. We represent attacker side information and tooling by a variable $W\in\cW$, which encodes
(i) which fields are accessible in the candidate profile database (metadata-only vs.\ enriched profiles),
(ii) which auxiliary corpora or snapshots are available,
(iii) whether the attacker may use LLM-based clue extraction or web-search augmentation, and
(iv) decoding/sampling settings.
We interpret different values of $W$ as \emph{knowledge levels} (threat configurations) and report
certificates separately for each knowledge level.
\paragraph{Candidate-pool access (closed vs.\ open world).}
Linkage is meaningful only relative to a finite candidate pool. We assume the attacker operates on a set of candidates
\begin{equation}
\cY(x,w) \subseteq \cU,\qquad |\cY(x,w)|<\infty,
\end{equation}
derived from an attacker-accessible candidate profile database, demographic filters, or a fixed snapshot corpus. In the \emph{closed-world} setting the true identity is guaranteed to lie in $\cY(x,w)$; in the \emph{open-world} setting it may be absent. When we restrict scoring to a top-$K$ retrieved subset for computational reasons, this truncation is an engineering choice that can re-introduce open-world behavior; we quantify the resulting effect on coverage empirically in the stress tests of Section~\ref{sec:results-robustness}.

\paragraph{Query interface and budget.}
Given $(x,w)$ and candidate pool $\cY(x,w)$, the attacker pipeline produces a ranking or
distribution over candidates:
\begin{equation}
p_A(\cdot \mid x,w) \in \Delta(\cY(x,w)).
\end{equation}
Our framework treats the attacker as a black box. In the weakest interface we assume only \emph{sampling-only access} via repeated forced-choice queries over $\cY(x,w)$, with a declared per-document sampling budget $m$; when candidate log-probabilities are exposed we use them directly. This covers retrieval-based linkers (e.g., similarity search over profiles) and LLM-based pipelines that infer attributes/clues and match them to profiles, including proprietary API models.

\paragraph{Scope of the certificate.}
The candidate-pool construction, release mechanism, query mode, knowledge level $w$, sampling budget $m$, and the attacker's randomness model are all part of the declared configuration, and CPA certifies coverage \emph{for that configuration only}: any change to the release distribution or attacker pipeline (e.g., a new prompt, model, or web snapshot) requires recalibration. We stress-test this assumption boundary empirically in the robustness analysis of Section~\ref{sec:results-robustness} and the full tables of Appendix~\ref{app:robustness}.

\paragraph{Auditing goal and outputs.}
For each document $(x,w)$, the auditor outputs a \emph{certified ambiguity set} $C_\alpha(x,w)\subseteq \cY(x,w)$ such that the true identity is included with user-chosen confidence $1-\alpha$. The set size $|C_\alpha(x,w)|$ is the \emph{primary certificate}: it counts how many candidate identities remain statistically plausible after the attack. For reporting convenience we also summarize leakage by the \emph{inverse ambiguity proxy}
\begin{equation}
L(x,w) := \frac{1}{\max\{1,|C_\alpha(x,w)|\}}.
\end{equation}
We emphasize that $L(x,w)$ is \emph{not} a probability of re-identification: it is a monotone transform of the certified set size, and because it decays rapidly with $|C_\alpha|$, seemingly small values can still represent substantial risk (e.g., $L=0.1$ means the attacker has certifiably narrowed the subject to $10$ candidates, which may be unacceptable under many release policies). $L(x,w)$ should be interpreted as a \emph{certificate of attacker ambiguity} rather than a cryptographic privacy guarantee.

\paragraph{Candidate-miss (open-world) caveat.}
In open-world settings the true identity may be absent from the candidate pool. We measure this via the candidate-miss probability
\begin{equation}
\rho(w) := \Pr\big(Y\notin \cY(X,w)\big),
\end{equation}
reported empirically as \texttt{candidate\_miss\_rate}. Our coverage guarantees apply conditional on $Y\in \cY(X,w)$ and degrade gracefully with $\rho(w)$, as quantified by the open-world bound in Corollary~\ref{cor:open-world}.

\section{Methodology: Conformal Privacy Auditing (CPA)}
\label{sec:method}

\subsection{Nonconformity for Identity Linkage: APS Mass Score}
\label{sec:score}

We require a scalar score that measures how much probability mass must be accumulated to include the true identity under attacker $A$. Let $\pi_1(x,w),\pi_2(x,w),\dots$ be candidates sorted by decreasing $p_A(\cdot\mid x,w)$. We use an APS-style \emph{mass score}, restricted to the
candidate pool:

\begin{equation}
s(x,y;w) := \sum_{\substack{j:\,p_A(\pi_j\mid x,w) \\ \ge p_A(y\mid x,w)}} p_A(\pi_j\mid x,w) \in (0,1],
\label{eq:aps}
\end{equation}
where we abbreviate $\pi_j\equiv\pi_j(x,w)$.
Smaller $s(x,y;w)$ indicates that the attacker concentrates probability on the truth (high linkage
risk). Larger $s(x,y;w)$ indicates diffuse belief (lower linkage risk).

\paragraph{CPA is not tied to a single score.}
The calibration layer of CPA requires only an exchangeable scalar score; the APS mass score is one instantiation. In particular, the weights in \eqref{eq:aps} need not be calibrated probabilities: for retrieval attackers they are pseudo-posteriors obtained by softmax-normalizing similarity scores, and for ranking-only attackers a purely rank-based nonconformity score is equally valid. Our ablations (Section~\ref{sec:results-robustness}, Table~\ref{tab:score-ablation}) show that APS, rank-based, and raw-probability constructors all achieve valid coverage; we adopt APS as a robust default rather than claiming it is uniformly dominant.

\paragraph{Sampling-only interface.}
When logits are unavailable, we query the attacker $m$ times with forced-choice decoding over $\cY(x,w)$ and form empirical frequencies $\hat p_m(\cdot\mid x,w)$. We then compute the plug-in score $\hat s_m(x,y;w)$ by replacing $p_A$ with $\hat p_m$ in \eqref{eq:aps}. This yields a fully
black-box conformal pipeline applicable to proprietary LLM APIs that expose no logits.

\subsection{Split Conformal Ambiguity Sets}
\label{sec:split-cpa}

Given calibration data $\{(X_i,Y_i,W_i)\}_{i=1}^n$ and a fixed knowledge level $w$ (or conditioning on $W=w$), we compute scalar scores
\begin{equation}
T_i :=
\begin{cases}
s(X_i,Y_i;w), & \text{probability/logits available},\\
\hat s_m(X_i,Y_i;w), & \text{sampling-only}.
\end{cases}
\end{equation}
Let $T_{(1)}\le \cdots \le T_{(n)}$ be the sorted scores and define the split-conformal threshold
\begin{equation}
k := \left\lceil (n+1)(1-\alpha)\right\rceil,\qquad \hat q_\alpha := T_{(k)}.
\label{eq:qhat}
\end{equation}
For a new document $(x,w)$, we output the calibrated ambiguity set, i.e., the prediction set that contains the true identity with high confidence:
\begin{equation}
C_\alpha(x,w)
:= \{y\in \cY(x,w): s(x,y;w)\le \hat q_\alpha\},
\label{eq:set}
\end{equation}
using $s$ or $\hat s_m$ depending on access. Note that the inequality direction ``$\le$'' in \eqref{eq:set} follows the \emph{nonconformity} convention: top-ranked candidates have \emph{smaller} cumulative mass scores, so the set retains all candidates that the attacker cannot statistically rule out. In particular, $\hat q_\alpha$ is a calibrated quantile of nonconformity scores, not a probability-of-linkage threshold. Section~\ref{sec:theory} gives the finite-sample coverage guarantee under exchangeability, together with its open-world extension.

\subsection{Instantiations of the Attacker Pipeline}
\label{sec:instantiations}

Our framework is attacker-agnostic: any pipeline that induces a distribution over candidates can be conformalized without any modification.

\paragraph{Retrieval-CPA (profile matching).}
We construct a profile text representation for each candidate profile (using a controlled schema) and define $p_A(\cdot\mid x,w)$ from normalized similarity scores (e.g., TF--IDF cosine similarity). Knowledge level $w$ determines which profile fields and which query transformations are permitted (e.g., metadata-only vs.\ keyword profiles; with or without LLM clue extraction at query time).

\paragraph{Profile-CPA (LLM extraction $\rightarrow$ matching).}
An LLM attacker extracts attributes from $x$ (e.g., occupation, location, age range), producing distributions over attribute values via sampling. These distributions induce probabilistic match scores against candidate profiles, yielding $p_A(\cdot\mid x,w)$ over identities.
We then apply CPA exactly as in \eqref{eq:qhat}--\eqref{eq:set}. This instantiation matches a realistic threat model in which LLMs infer quasi-identifiers and link to a candidate profile. The full procedure is listed in Algorithm~\ref{alg:cpa} in Appendix~\ref{app:alg1}.

\section{Privacy guarantee for CPA}
\label{sec:theory}

This section formalizes the statistical guarantees provided by Conformal Privacy Auditing (CPA).
Our results are distribution-free (no parametric assumptions on data) and apply to black-box
attackers, provided an exchangeability assumption holds for the calibration and evaluation
distributions under the declared attacker knowledge level $w$ (Section~\ref{sec:threat-model}).

\subsection{Assumptions}
\label{sec:theory-assumptions}

\paragraph{Exchangeability (within a declared knowledge level).}
Fix a knowledge level $w\in\cW$ that specifies the attacker's side information and tooling (Section~\ref{sec:threat-model}).
We assume that the sequence $\{(X_i,Y_i)\}_{i=1}^{n+1}$ used for calibration/evaluation is
exchangeable under this fixed $w$ (i.i.d.\ sampling is sufficient).
Equivalently, in the most general form we assume $\{(X_i,Y_i,W_i)\}_{i=1}^{n+1}$ is exchangeable;
when experiments are run stratified by category and knowledge level (as in our suites),
$W_i\equiv w$ is constant and exchangeability reduces to exchangeability of $(X_i,Y_i)$ within that stratum.

\paragraph{Attacker pipeline and randomness.}
For each example $i$, the attacker pipeline (retrieval or LLM-based) may involve internal
randomness---e.g., sampling-only forced-choice decoding, stochastic clue extraction, or stochastic
rewriting. We denote this randomness by $U_i$ and assume:
(i) $U_i$ are i.i.d.\ across examples, and
(ii) $U_i$ are independent of the data $(X_i,Y_i,W_i)$.
This reflects operational practice: each audit query uses an independent random seed / sampling
stream, and the auditor does not adversarially choose randomness based on the content of any individual example.

\paragraph{Candidate-pool miss probability.}
In open-world linkage, the true identity may be absent from the candidate pool $\cY(X,w)$.
We measure this using the miss probability
\begin{equation}
\Pr\big(Y\notin \cY(X,w)\big)\le \rho(w),
\label{eq:rho-theory}
\end{equation}
reported empirically as \texttt{candidate\_miss\_rate}. Our strongest guarantees apply conditional on
$Y\in\cY(X,w)$ and degrade gracefully with the miss probability $\rho(w)$, as shown in Corollary~\ref{cor:open-world}.

\subsection{CPA Validity for a Fixed Attacker Configuration}
\label{sec:theory-cpa}

Recall the APS-style nonconformity score (restricted to $\cY(x,w)$):
\[
s(x,y;w)
= \sum_{\substack{y':\, p_A(y'\mid x,w)\\ \ge p_A(y\mid x,w)}} p_A(y'\mid x,w)\in(0,1].
\]
In the sampling-only interface, $p_A$ is replaced by empirical frequencies $\hat p_m$, yielding
$\hat s_m(x,y;w)$. CPA uses split conformal calibration: compute scores on a calibration set,
take $\hat q_\alpha$ as the $\lceil(n+1)(1-\alpha)\rceil$-th order statistic, and return
$C_\alpha(x,w)=\{y: s(x,y;w)\le \hat q_\alpha\}$.

\begin{lemma}[Score exchangeability under sampling-only access]
\label{lem:score-exch}
Let $\{(X_i,Y_i,W_i)\}_{i=1}^{n+1}$ be exchangeable and let $U_i$ be i.i.d.\ independent randomness
used by the attacker pipeline to compute $\hat s_m(X_i,Y_i;W_i)$.
Define scalar scores
\[
T_i := \hat s_m(X_i,Y_i;W_i) = f(X_i,Y_i,W_i,U_i)
\]
for a deterministic measurable function $f$ (deterministic given the example and the sampled outputs).
Then $\{T_i\}_{i=1}^{n+1}$ is an exchangeable sequence of scalar scores.
\end{lemma}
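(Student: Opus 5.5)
The plan is to lift exchangeability from the data to the augmented tuples $V_i := (X_i,Y_i,W_i,U_i)$, and then push it through the fixed map $f$.

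\emph{Step 1: the augmented tuples are exchangeable.} Write $Z_i := (X_i,Y_i,W_i)$. By assumption, $(U_1,\dots,U_{n+1})$ is i.i.d.\ and independent of $(Z_1,\dots,Z_{n+1})$. So the joint law of $(V_1,\dots,V_{n+1})$ factorizes as the product measure
\[
\mathcal{L}(Z_1,\dots,Z_{n+1})\otimes \mu^{\otimes(n+1)},
\]
where $\mu$ is the common law of the $U_i$. Fix a permutation $\sigma$ of $\{1,\dots,n+1\}$.
\begin{itemize}
\item The first factor is invariant under $\sigma$ because the $Z_i$ are exchangeable.
\item The second factor is invariant under $\sigma$ because an i.i.d.\ product measure is symmetric.
\item Independence of the two blocks lets us apply $\sigma$ jointly to both coordinates without altering the product structure.
\end{itemize}
Hence $(V_{\sigma(1)},\dots,V_{\sigma(n+1)}) \overset{d}{=} (V_1,\dots,V_{n+1})$. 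To make this rigorous, I would check the identity on rectangles $\prod_i (A_i\times B_i)$ and extend by the $\pi$--$\lambda$ theorem.

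\emph{Step 2: push forward through $f$.} Set $T_i = f(V_i)$, using the same deterministic measurable $f$ for every index. Define $F:(v_1,\dots,v_{n+1})\mapsto (f(v_1),\dots,f(v_{n+1}))$. Then $F$ commutes with coordinate permutations:
\[
F\circ\sigma = \sigma\circ F.
\]
It follows that
\[
(T_{\sigma(i)})_i = F\big((V_{\sigma(i)})_i\big) \overset{d}{=} F\big((V_i)_i\big) = (T_i)_i,
\]
because equal laws are preserved under a common measurable map. This gives exchangeability of $\{T_i\}_{i=1}^{n+1}$.

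\emph{Where the real content lies.} The main subtlety is Step 1, together with the modelling requirement hidden in ``$f$ deterministic.'' The argument needs two things:
\begin{enumerate}
\item The same function $f$, including the same candidate-pool construction, the same budget $m$, and the same prompt, is used for every index, so that $f$ does not depend on $i$.
\item $U_i$ is fresh per example and not shared or adaptively chosen.
\end{enumerate}
If the randomness were shared, for instance a common seed $U$, exchangeability would still hold conditionally on $U$, and hence marginally. I would note this as a remark. Tie-breaking in the sorted scores, if randomized, can be absorbed into $U_i$ in the same way.
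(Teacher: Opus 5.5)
Your proposal is correct and takes the same route as the paper's proof sketch: augment each example with its attacker randomness $U_i$, deduce exchangeability of $(X_i,Y_i,W_i,U_i)$ from exchangeability of the data plus the i.i.d.\ $U_i$ being independent of the data as a block, and then push the joint law through the common map $f$ applied to each coordinate. Your version adds the measure-theoretic details that the paper leaves implicit: the product-measure factorization, the $\pi$--$\lambda$ extension from rectangles, and $F\circ\sigma=\sigma\circ F$.
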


\begin{proof}[Proof sketch]
Augment each example to $(X_i,Y_i,W_i,U_i)$. Exchangeability of $(X_i,Y_i,W_i)$ and i.i.d.\ independence
of $U_i$ implies the augmented sequence is exchangeable. Since $T_i$ is a deterministic function of
the augmented example, exchangeability is preserved under deterministic measurable mappings of the entire augmented example sequence.
\end{proof}

\begin{theorem}[Finite-sample CPA validity for identity linkage]
\label{thm:cpa-valid}
Fix a knowledge level $w$ and attacker pipeline $A$. Assume:
(i) exchangeability holds for calibration and test examples under $w$,
(ii) $Y\in \cY(X,w)$ almost surely (closed-world or $\rho(w)=0$), and
(iii) sampling randomness (if used) satisfies Lemma~\ref{lem:score-exch}.
Let $\hat q_\alpha$ be the split-conformal threshold computed from $n$ calibration scores at level $\alpha$,
and let $C_\alpha(\cdot,w)$ be defined by the rule $s(x,y;w)\le \hat q_\alpha$ (or $\hat s_m$ in the sampling-only case).
Then for a new test example $(X_{n+1},Y_{n+1})$,
\begin{equation}
\Pr\big(Y_{n+1}\in C_\alpha(X_{n+1},w)\big) \ge 1-\alpha.
\end{equation}
\end{theorem}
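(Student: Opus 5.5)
The proof reduces to the standard split-conformal rank argument applied to the scalar scores $T_1,\dots,T_{n+1}$, where $T_{n+1}$ is the score of the test example at its \emph{true} identity, computed in the same way as the calibration scores (with $s$ when probabilities are available, with $\hat s_m$ and fresh randomness $U_{n+1}$ otherwise).

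\textbf{Step 1: rewrite the coverage event as a score inequality.} By assumption (ii), $Y_{n+1}\in\cY(X_{n+1},w)$ almost surely. By the definition \eqref{eq:set}, this gives the almost-sure identity
\[
\{Y_{n+1}\in C_\alpha(X_{n+1},w)\}=\{T_{n+1}\le \hat q_\alpha\},
\qquad \hat q_\alpha=T_{(k)},\quad k=\lceil(n+1)(1-\alpha)\rceil .
\]
Here $T_{(k)}$ is the $k$-th smallest of the $n$ calibration scores. This is the only place the closed-world assumption enters. Without it, the true identity could fail the membership requirement $y\in\cY(x,w)$ even when its score is small.

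\textbf{Step 2: exchangeability of the scores.} In the logit-access case, $T_i=s(X_i,Y_i;w)$ is a fixed measurable function of the example, so exchangeability of $(X_i,Y_i)$ under $w$ (assumption (i)) transfers directly to $(T_i)_{i\le n+1}$. In the sampling-only case, I invoke Lemma~\ref{lem:score-exch} via assumption (iii). The one point to verify is that the test score uses the same map $f$ and an independent draw $U_{n+1}$, so that $T_{n+1}$ is on equal footing with the calibration scores.

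\textbf{Step 3: rank argument.} Let $R$ be the rank of $T_{n+1}$ among $T_1,\dots,T_{n+1}$, with ties broken uniformly at random. Exchangeability makes $R$ uniform on $\{1,\dots,n+1\}$. Hence $\Pr(R\le k)=k/(n+1)\ge 1-\alpha$.

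It remains to check that $R\le k$ implies $T_{n+1}\le T_{(k)}$. If $T_{n+1}$ is among the $k$ smallest of all $n+1$ scores, then at most $k-1$ calibration scores lie strictly below it. Therefore the $k$-th smallest calibration score is at least $T_{n+1}$, which is the required inequality. Ties only help, because the set uses ``$\le$''. The case $k>n$ (i.e.\ $\alpha<1/(n+1)$) is handled by the convention $\hat q_\alpha=+\infty$, under which coverage is trivial.

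\textbf{Main obstacle.} The inequality $\Pr(T_{n+1}\le T_{(k)})\ge k/(n+1)$ is routine once ties are handled. The step needing care is the sampling-only case. There one must ensure that the test set $C_\alpha(X_{n+1},w)$ is built from the same empirical $\hat p_m$ (the same $U_{n+1}$ draw) as the score $T_{n+1}$ of the true label. If every candidate were scored with separate randomness, the identification in Step 1 would fail. I would make this explicit by defining the set via a single draw of $\hat p_m(\cdot\mid X_{n+1},w)$ for the test document.
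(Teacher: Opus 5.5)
Your proposal is correct and follows essentially the same route as the paper's proof sketch: the scalar scores are exchangeable (via Lemma~\ref{lem:score-exch} in the sampling-only case), the rank of $T_{n+1}$ is uniform, the $k$-th order statistic gives $\Pr(T_{n+1}\le \hat q_\alpha)\ge 1-\alpha$, and under the closed-world assumption the event $\{Y_{n+1}\in C_\alpha(X_{n+1},w)\}$ coincides with $\{T_{n+1}\le \hat q_\alpha\}$. Your explicit treatment of ties (common here, since APS scores built from empirical frequencies take few values and often saturate at $1$), of the $k>n$ edge case, and of the requirement that the test set use a single draw of $\hat p_m$ (as Algorithm~\ref{alg:cpa} does) fills in details the paper's sketch leaves implicit.
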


\begin{proof}[Proof sketch]
Let $T_i$ denote the scalar scores used for conformalization. By exchangeability (Lemma~\ref{lem:score-exch} if needed),
the rank of $T_{n+1}$ among $\{T_1,\dots,T_n,T_{n+1}\}$ is uniform. With $\hat q_\alpha$ chosen as the
$k$-th order statistic for $k=\lceil(n+1)(1-\alpha)\rceil$, we obtain
$\Pr(T_{n+1}\le \hat q_\alpha)\ge 1-\alpha$. Membership $Y\in C_\alpha(X,w)$ is equivalent to $T\le \hat q_\alpha$.
\end{proof}

\begin{corollary}[Open-world validity with candidate-miss probability]
\label{cor:open-world}
Fix $w$ and suppose $\Pr(Y\in \cY(X,w))\ge 1-\rho(w)$.
If conditional on the event $E=\{Y\in \cY(X,w)\}$ the CPA set satisfies
$\Pr(Y\in C_\alpha(X,w)\mid E)\ge 1-\alpha$, then
\begin{equation}
\begin{aligned}
\Pr\big(Y\in C_\alpha(X,w)\big)&\ge (1-\rho(w))(1-\alpha)\\
&\ge 1-(\rho(w)+\alpha).
\end{aligned}
\end{equation}
\end{corollary}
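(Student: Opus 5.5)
The plan is to lower-bound the unconditional coverage by the probability of the joint event ``$Y$ lies in the pool \emph{and} lies in the set,'' factor that via conditioning on $E$, and then linearize the product.

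\textbf{Step 1: restrict to the pool.} Since $C_\alpha(X,w)\subseteq \cY(X,w)$ by construction in \eqref{eq:set}, the event $\{Y\in C_\alpha(X,w)\}$ is contained in $E=\{Y\in\cY(X,w)\}$. Hence
\[
\Pr\big(Y\in C_\alpha(X,w)\big)=\Pr\big(\{Y\in C_\alpha(X,w)\}\cap E\big).
\]
Even without this containment, the right-hand side would still be a lower bound for the left, so the argument does not depend on it.

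\textbf{Step 2: factor and apply the assumptions.} Write the joint probability as $\Pr(Y\in C_\alpha(X,w)\mid E)\,\Pr(E)$. The hypothesis bounds the first factor below by $1-\alpha$, and the assumption $\Pr(E)\ge 1-\rho(w)$ bounds the second. Both factors are nonnegative, so multiplying the two lower bounds is legitimate and gives the first inequality $(1-\rho(w))(1-\alpha)$. If $\Pr(E)=0$, the conditional probability is undefined; but then $\rho(w)\ge 1$ and both bounds are trivial, so this case can be set aside.

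\textbf{Step 3: linearize.} Expand $(1-\rho)(1-\alpha)=1-\rho-\alpha+\rho\alpha$. Since $\rho\alpha\ge 0$, this is at least $1-(\rho(w)+\alpha)$, which is the second inequality.

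\textbf{Main obstacle.} The algebra is routine; the only step needing care is justifying the conditional hypothesis $\Pr(Y\in C_\alpha\mid E)\ge 1-\alpha$ itself, and the corollary takes this as given. In applications, one would obtain it by invoking Theorem~\ref{thm:cpa-valid} on the subpopulation conditioned on $E$. That requires the calibration examples to be drawn from the same conditional law, for instance by calibrating only on examples whose true identity is in the pool, so that exchangeability holds given $E$. I would remark on this, but not prove it, since it is an explicit hypothesis of the statement.
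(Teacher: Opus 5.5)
Your proposal is correct and follows the paper's proof. The paper's proof is the one-line total-probability factorization $\Pr(Y\in C_\alpha)=\Pr(E)\Pr(Y\in C_\alpha\mid E)$ followed by the two bounds. Your Step~1 makes explicit the containment $C_\alpha(X,w)\subseteq\cY(X,w)$ that the paper's equality silently relies on, and your Step~3 supplies the elementary linearization that the paper leaves implicit.
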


\begin{proof}
By the law of total probability,
$\Pr(Y\in C_\alpha)=\Pr(E)\Pr(Y\in C_\alpha\mid E)$, and apply the two bounds.
\end{proof}

Theorem~\ref{thm:cpa-valid} states that CPA outputs a \emph{valid ambiguity set} containing the truth
with probability at least $1-\alpha$ under the declared attacker knowledge level $w$.
The ambiguity proxy $1/|C_\alpha|$ is therefore a calibrated \emph{certificate of attacker ambiguity},
not a claim about universal privacy against unbounded attackers.

\section{Experiments}
\label{sec:exp}

\paragraph{Goal.}
We evaluate \emph{conformal privacy auditing} as a reporting layer for document release.
Given a released text $x$ and an attacker-side \emph{candidate profile database} $\mathcal{R}$ that encodes auxiliary knowledge,
our auditor outputs a \emph{certified ambiguity set} $C_\alpha(x)\subseteq \mathcal{R}$.
Intuitively, $|C_\alpha(x)|$ quantifies how many plausible identities remain after the attack pipeline is applied, while the
finite-sample conformal guarantee ensures calibrated uncertainty under exchangeability.

\subsection{Datasets and Release Mechanisms}
\label{sec:exp:data}

We use datasets where each released document is associated with a ground-truth identity/profile, enabling identity linkage evaluation.
For each dataset we construct (i) released documents $x$ (e.g., anonymized text) and (ii) a candidate profile database $\mathcal{R}$ of candidate profiles
built from \emph{non-released} sources (e.g., original text, metadata, or historical documents), reflecting attacker side information.
Because certified ambiguity is only meaningful relative to the candidate pool, we describe the per-dataset candidate-pool construction explicitly below; Table~\ref{tab:threat-config} in Appendix~\ref{app:threat-config} maps every experimental suite to its full threat configuration.

\textbf{TextWash.}
TextWash~\cite{kleinberg2022textwash} provides paired \texttt{orig} and \texttt{anon} person descriptions.
We treat \texttt{anon} as the released document and build the candidate profiles from \texttt{orig}.
The benchmark includes three categories (famous, semi-famous, fictional), which we interpret primarily as differing \emph{availability of external context}:
external web knowledge is abundant for famous individuals, limited for semi-famous, and absent for fictional entities.
\emph{Candidate pool:} one profile per identity in the same category, constructed from the \texttt{orig} description (with identity tokens removed); linkage is evaluated within-category, so the pool size equals the size of the category ($\approx 400$ candidate profiles).

\textbf{TAB (text anonymization benchmark).}
TAB~\cite{pilan2022tab} consists of text records where anonymization may preserve substantial lexical content.
We use TAB as an easy leakage regime to verify that the auditor reports near-singleton ambiguity sets when linkage is straightforward.
\emph{Candidate pool:} all $1{,}268$ case records, each represented by a profile built from metadata only (K1), extracted keywords (K2), or case text (K3), with no verbatim copying of the released document into the profiles.

\textbf{WikiBio.}
WikiBio~\cite{stranisci2023wikibio} provides biography-style text with structured fields.
We use WikiBio to control \emph{profile completeness} by varying which fields are exposed to the attacker.
\emph{Candidate pool:} $1{,}000$ candidate profiles built from infobox-style fields or extracted keywords, audited against $500$ released biographies drawn from the prepared slice.

\textbf{Blog Authorship Corpus.}
We use the Blog Authorship corpus~\cite{schler2006effects} as a large-scale linkage setting.
To keep experiments tractable, we subsample to a fixed number of authors and released posts.
\emph{Candidate pool:} one profile per author, built by concatenating $W$ held-out posts ($W\in\{1,3,10\}$) that are disjoint from the released posts, so that $W$ directly controls the strength of attacker side information in this suite.

\paragraph{Candidate-pool truncation and miss rate.}
For expensive attacker pipelines we optionally restrict conformal scoring to the top-$K$ retrieved candidates---an \emph{engineering choice} that can violate the closed-world condition of Theorem~\ref{thm:cpa-valid} if the true identity is dropped. We therefore report the candidate-miss rate with both \emph{conditional} coverage (given the truth is retained) and \emph{unconditional} coverage, and quantify the effect of the truncation level $K$ in the stress tests of Section~\ref{sec:results-robustness}.

\subsection{Attack Pipelines and Knowledge Levels}
\label{sec:exp:attack}

Our threat model is a \emph{declared attacker pipeline} that maps a released document to a ranked list or distribution over candidates, given side information such as locations, gender, and occupations (Section~\ref{sec:threat-model}).
We study two complementary families of attackers.

\paragraph{Retrieval attacker.}
We build a text index over profiles.
Given a released document $x$, the attacker forms a query $q(x)$ and retrieves a ranked list of candidates.
We instantiate a \emph{knowledge ladder} by varying the information available in profiles and the query construction:
(i) a minimal schema subset (weak),
(ii) a richer profile description including auxiliary facts (medium),
and (iii) query augmentation using LLM-extracted clues from $x$ (strong).
For TextWash famous/semi-famous, we additionally consider a web-assisted attacker that can enrich the query $q(x)$ with web search results.

\paragraph{LLM attribute/clue attacker (sampling-based).}
We also evaluate an LLM that extracts attributes or salient clues from $x$ via repeated sampling.
These predictions are matched against the profiles to produce candidate scores.
We treat the number of LLM samples $m$ as a controllable attacker resource and study its impact on the resulting audit outcomes.
% \begin{table*}[ht]
% \centering
% \caption{Retrieval-CPA on TAB under three attacker side-information settings.}
% \label{tab:tab_suite}
% \begin{tabular}{lrrrrrr}
% \toprule
% Attacker setting & Top-1 & Coverage & Median $|C_\alpha|$ & Mean $|C_\alpha|$ & $\mathbb{E}[1/|C_\alpha|]$ & $\hat q_\alpha$ \\
% \midrule
% K1 meta-only & 0.008 & 1.000 & 1268 & 1268.00 & 0.000789 & 1.00 \\
% K2 keywords & 0.923 & 0.965 & 3 & 2.66 & 0.39 & 0.00304 \\
% K3 keywords+LLM clues & 0.923 & 0.975 & 3 & 2.98 & 0.336 & 0.0035 \\
% \bottomrule
% \end{tabular}
% \end{table*}
\begin{table*}[ht]
\centering
\small
\begin{tabular}{lrrrrr}
\toprule
Attacker setting & Top-1 & Cov. & Med. $|C|$ & Mean $|C|$ & Mean $1/|C|$ \\
\midrule
\multicolumn{6}{l}{\textbf{TAB-Direct}} \\
\midrule
K1 meta / direct & 0.052 & 0.929 & 1268 & 1179.5 & 9.57e-04 \\
K2 keywords / direct & 0.973 & 0.973 & 1 & 1 & 1.0000 \\
K3 case-text / direct & 0.994 & 0.994 & 1 & 1 & 1.0000 \\
K2 keywords / LLM-clues & 0.715 & 0.970 & 20 & 20.1 & 0.0499 \\
\midrule
\multicolumn{6}{l}{\textbf{TAB-Direct (Quasi-ID)}} \\
\midrule
K1 meta / direct & 0.003 & 0.932 & 1268 & 1173.6 & 0.0017 \\
K2 keywords / direct & 0.872 & 0.953 & 4 & 3.7 & 0.2717 \\
K3 case-text / direct & 0.964 & 0.964 & 1 & 1 & 1.0000 \\
K2 keywords / LLM-clues & 0.615 & 0.945 & 27 & 27.1 & 0.0369 \\
\bottomrule
\end{tabular}
\caption{TAB linkage auditing ($\alpha=0.05$). Smaller ambiguity sets indicate higher certified identifiability. \emph{direct} uses the released document as the retrieval query; \emph{LLM-clues} uses an LLM to extract a short clue string from the released document before retrieval. \emph{meta/keywords/case-text} describe which fields are available in the candidate profile database. All suites use split conformal calibration with disjoint calibration/test partitions; the full declared threat configuration per suite is summarized in Table~\ref{tab:threat-config} in Appendix~\ref{app:threat-config}.}

\label{tab:tab_results_grouped}
\end{table*}
\paragraph{Models.}
\label{sec:exp:models}
We use two open-source instruction-following LLMs as local attackers (Llama-3.1-8B-Instruct~\cite{grattafiori2024llama3} and Qwen3-8B-Instruct~\cite{yang2025qwen3}),
and one stronger cloud model (GPT-5).
Local models are used for clue extraction and attribute prediction in the sampling-based attacker.
The retrieval attacker does not require an LLM unless explicitly augmented.

\paragraph{Metrics.} We report both non-conformal ranking metrics and conformal auditing metrics.
\textit{Non-conformal baselines.}
We report Top-1/Top-$k$ accuracy, mean reciprocal rank (MRR), and mean rank of the true identity in the retrieved list.
\textit{Conformal privacy auditing.}
For each $\alpha\in\{0.01, 0.05, 0.1\}$ we report empirical coverage
$\frac{1}{N}\sum_{i=1}^N \mathbf{1}\{y_i\in C_\alpha(x_i)\}$,
the distribution of set sizes $|C_\alpha(x)|$ (median and tail percentiles),
and a leakage proxy $1/\max\{1,|C_\alpha(x)|\}$.
When the candidate pool may not contain the truth, we also report the empirical candidate-miss rate for each suite.

\subsection{Protocol and Reproducibility}
\label{sec:exp:protocol}

For each dataset and attacker configuration we use a split conformal protocol with disjoint calibration and test partitions.
Because the conformal threshold $\hat q_\alpha$ is an order statistic of the calibration scores, it depends on the calibration size; we therefore treat the split sizes as part of the declared configuration rather than as fixed parameters, state them explicitly wherever attackers are compared (e.g., $n_{\mathrm{cal}}{=}n_{\mathrm{test}}{=}100$ per TextWash category), and ablate the effect of $n_{\mathrm{cal}}$ in Section~\ref{sec:results-robustness}.
When the attack pipeline includes sampling (LLM attribute/clue extraction), sampling randomness is generated independently per example, with a declared per-document sampling budget $m$.
We calibrate a separate conformal threshold per dataset and knowledge level, enabling statistically grounded comparison across threat models
without re-tuning attack heuristics.
Attacker-to-attacker comparisons (e.g., direct retrieval vs.\ LLM-assisted clues) are made only on \emph{matched} calibration/test budgets, since certified set sizes are not comparable across different split sizes.
Whenever the release distribution or the attacker pipeline changes---including prompt, model, or web-snapshot changes---we treat the result as a \emph{new} attacker configuration and recalibrate before issuing certificates.

\section{Experimental Results}
\label{sec:results}

\begin{table*}[ht]
\centering
\caption{Retrieval-CPA on WikiBio (500 released bios; 1k candidates) under direct query vs LLM-assisted clue extraction ($\alpha=0.05$), with split conformal calibration over disjoint calibration/test partitions of the released bios; the full declared threat configuration is summarized in Table~\ref{tab:threat-config} in Appendix~\ref{app:threat-config}.}
\label{tab:wikibio}
\begin{tabular}{lrrrrrr}
\toprule
Attacker setting & Top-1 & Coverage & Median $|C_\alpha|$ & Mean $|C_\alpha|$ & $\mathbb{E}[1/|C_\alpha|]$ & $\hat q_\alpha$ \\
\midrule
Direct query & 0.362 & 0.956 & 264 & 254.98 & 0.00466 & 0.277 \\
LLM clues (Llama-3.1-8B) & 0.420 & 0.950 & 189 & 186.87 & 0.00562 & 0.2 \\
LLM clues (Qwen3-8B) & 0.380 & 0.958 & 265 & 255.69 & 0.00465 & 0.277 \\
\bottomrule
\end{tabular}
\end{table*}

% =========================
% Results (rewrite)
% =========================

\paragraph{CPA yields comparable, threat-model-specific certificates.}

Across tables and datasets, the empirical coverage is generally close to the target $1-\alpha$ (e.g., $0.95$ when $\alpha=0.05$), supporting that CPA produces a statistically meaningful certificate that is comparable across attacker configurations rather than a single, uncalibrated, attack-specific success rate reported in isolation from the declared threat model.

\paragraph{Release mechanism $\times$ auxiliary knowledge jointly determine certified identifiability.}
Table~\ref{tab:tab_results_grouped} shows a sharp transition in certified identifiability on TAB as the attacker's auxiliary knowledge increases.
With \emph{weak side information} (K1: metadata-only profiles), CPA returns near-full ambiguity sets (median $|C_\alpha|$ close to the full candidate pool), certifying substantial residual uncertainty even if the attacker occasionally ranks the true identity highly.
When the attacker is granted \emph{richer profile signals} (K2: keyword profiles; K3: case-text profiles), the ambiguity sets collapse toward singletons, certifying that the \emph{same released documents} become highly linkable under stronger auxiliary knowledge.
The quasi-identifier release variant exhibits the expected mitigation effect: compared to direct release, CPA sets expand (e.g., median $|C_\alpha|\approx 4$ under K2 keywords), indicating reduced---but still non-trivial---linkability.
Overall, TAB illustrates the central message of CPA: privacy risk is not an intrinsic property of the released text alone, but depends critically on the attacker's declared side information and tooling.

\paragraph{The miscoverage level $\alpha$ is a practical audit ``dial'' that trades certificate strength for ambiguity.}
Figure~\ref{fig:alpha_effect} (Appendix~\ref{app:robustness}) visualizes the conformal trade-off between statistical confidence and the size of the certified ambiguity set.
On TAB (direct retrieval under the quasi-identifier release), tightening the guarantee (smaller $\alpha$) yields larger ambiguity sets: at $\alpha{=}0.01$, empirical coverage is $0.998$ and the median set size is $19$ (low leakage proxy $\mathbb{E}[1/|C_\alpha|]\approx 0.052$).
Relaxing the guarantee shrinks sets and increases the leakage proxy: at $\alpha{=}0.05$, the median set size drops to $4$ ($\mathbb{E}[1/|C_\alpha|]\approx 0.233$), and at $\alpha{=}0.1$ the median reaches $1$ ($\mathbb{E}[1/|C_\alpha|]\approx 0.825$).
Once $\alpha$ becomes sufficiently large, CPA enters a near-singleton regime: ambiguity sets cannot shrink below size $1$, so the conformal certificate effectively coincides with the attacker's point prediction and empirical coverage approaches the attacker's Top-1 accuracy.
This provides a clean interpretation for auditors: \emph{smaller $\alpha$ certifies residual ambiguity more conservatively; larger $\alpha$ yields smaller certified sets at correspondingly weaker confidence.}

Blog Authorship shows the same qualitative ``dial'' behavior: with only coarse demographic-style profile fields (``W\_demo''), calibration saturates ($\hat q_\alpha=1$) and CPA returns near-full candidate sets across $\alpha$, whereas with text-enriched profiles (``W\_text'', $W{=}10$ posts/author) the median $|C_\alpha|$ drops from $9$ ($\alpha{=}0.01$) to $7$ ($\alpha{=}0.05$) to $5$ ($\alpha{=}0.1$), with coverage decreasing accordingly as the guarantee weakens.

\paragraph{LLM-assisted clue extraction can amplify linkage---but the effect is model-dependent and visible only with set-valued certificates.}
Table~\ref{tab:wikibio} shows that augmenting retrieval with LLM-generated clues can either reduce ambiguity sets (higher certified identifiability) or fail to help, depending on the attacker model.
For WikiBio, Llama-based clue extraction increases Top-1 accuracy and shrinks the median ambiguity set relative to direct retrieval, whereas Qwen-based clues behave closer to the direct baseline.
This highlights a key methodological point for LLM-era privacy auditing: different plausible attacker models can induce meaningfully different linkage risks, and CPA offers a principled way to compare them under a common calibrated output.
A similar, but more nuanced, pattern appears on TextWash under matched evaluation budgets (Table~\ref{tab:appendix:textwash-retrieval-cpa}): LLM clues reduce certified ambiguity only where external context exists, making shifts in \emph{certified residual uncertainty} visible where fixed Top-$k$ summaries would show little or no measurable change.

\paragraph{Scale and cohort effects: narrowing the candidate pool can dominate the certificate.}
When the candidate pool is large, certified ambiguity sets remain large even with enriched side information (Table~\ref{tab:blog}), indicating substantial residual uncertainty at the chosen confidence level.
In contrast, in ``motivated intruder'' scenarios where the pool is implicitly narrowed (organizational, demographic, or geographic constraints), CPA sets can collapse dramatically, certifying much higher linkability.
Realistic auditing should therefore report results under multiple candidate-pool assumptions, not only under a single global pool.
\paragraph{TextWash Retrieval-CPA under matched budgets.}
Table~\ref{tab:appendix:textwash-retrieval-cpa} reports conformal privacy auditing results on TextWash at \(\alpha=0.05\), with direct retrieval and LLM-clue attackers evaluated on the \emph{same} capped split ($n_{\mathrm{cal}}=n_{\mathrm{test}}=100$ per category), so that certified set sizes are directly comparable across attackers.
Direct retrieval yields large certified ambiguity sets (median \(|C_{\alpha}|\) of 120--169.5 depending on category), indicating that even when Top-1 accuracy is moderate, the attacker still cannot be \emph{certifiably} confident about a small identity set.
The effect of LLM-assisted clue extraction is \emph{category- and model-dependent} rather than uniform: for famous subjects, where external context exists, Llama clues shrink the certified median from 120 to 90 (Top-1 $0.470\rightarrow0.510$), whereas for fictional and semi-famous subjects---where little external knowledge is available---clue extraction leaves the certified sets essentially unchanged relative to direct retrieval.
We therefore do not claim that LLM clues uniformly dominate direct retrieval; rather, CPA makes visible \emph{where} LLM augmentation genuinely increases certified identifiability.
Empirical coverage remains close to the nominal target (\(\approx 0.95\)) across all matched runs.
\subsection{Robustness under Stressed Assumptions}
\label{sec:results-robustness}
We now stress each assumption behind the CPA guarantee empirically, making the boundary of the certificate visible rather than only theoretical (full tables and per-suite breakdowns in Appendix~\ref{app:robustness}).

\paragraph{Exchangeability and attacker/release drift.}
We calibrate under one configuration and test under another (Table~\ref{tab:shift-stress}).
Matched configurations meet the $1-\alpha=0.95$ target (TAB direct$\rightarrow$direct $0.965$; Blog W10$\rightarrow$W10 $0.940$), while shifted configurations fall below it (TAB direct$\rightarrow$quasi-ID $0.923$; Blog W10$\rightarrow$W1 $0.835$).
A CPA certificate is thus calibrated for a declared threat configuration and does \emph{not} automatically transfer under attacker or release drift---recalibration is the operational protocol.
More robust variants (e.g., Mondrian or weighted conformal prediction) are possible when the drift model is explicit, but do not provide distribution-free transfer under arbitrary prompt, model, or snapshot drift, which remains an open problem for conformal methods.

\paragraph{Candidate-pool miss and top-$K$ truncation.}
Sweeping the truncation level $K$ (Table~\ref{tab:topk-stress}) separates conformal miscoverage from candidate-pool failure.
For TextWash fiction with direct retrieval at $K=50$, the candidate-miss rate is $0.160$: coverage \emph{conditional} on candidate inclusion remains $1.000$, while unconditional coverage drops to $0.840$; at $K=200$ the miss rate vanishes and unconditional coverage recovers to $0.970$ (WikiBio+Llama behaves analogously).
Closed-world validity (Theorem~\ref{thm:cpa-valid}) applies conditional on candidate inclusion, and the observed degradation matches the multiplicative bound in Corollary~\ref{cor:open-world}.

\paragraph{Alternative score constructors at matched coverage.}
Replacing the APS mass score with rank-based, raw-probability, or calibrated top-$K$ constructors over cached LLM-clue queries (Table~\ref{tab:score-ablation}) shows that multiple constructors achieve valid coverage: on WikiBio+Llama at $\alpha=0.05$, APS attains coverage $0.950$ (median $|C_\alpha|=189$), rank-based $0.950$ (median $179$), and raw probability $0.960$ (median $212$).
CPA's calibration layer is therefore not tied to a single score; APS is a robust default rather than uniformly dominant, and rank-based scores extend CPA to attackers that expose nothing but a ranked list of candidate identities.

\paragraph{Calibration-set size.}
Varying $n_{\mathrm{cal}}\in\{50,100,150\}$ on matched TextWash direct retrieval at $\alpha=0.05$ (Table~\ref{tab:ncal-ablation}) yields average coverage $0.962/0.957/0.927$ with average median set sizes $151.2/148.2/140.3$: coverage stays near target while sets tighten with more calibration data.
Since $\hat q_\alpha$ is an order statistic of the calibration scores, certified set sizes are comparable only at matched splits; we state the split sizes explicitly for all matched comparisons and ablations.

\paragraph{Sampling budget and split stability.}
Increasing the sampling-only budget $m$ from $1$ to $10$ (Llama-3.1-8B-Instruct clue extraction on TextWash famous; Table~\ref{tab:m-ablation}) keeps coverage above target ($0.970$--$0.980$) while the median certified set size drops from $120$ to $78$: the budget affects \emph{efficiency}, not validity, once recalibrated.
Across three random calibration/test splits, coverage is stable at $0.951\pm0.037$ (median $|C_\alpha|$ $143.3\pm33.6$).
Prompt, model, or decoding changes are treated as attacker-configuration drift requiring recalibration.

\paragraph{Operational interpretation.}
The key advantage of CPA over fixed Top-$k$ reporting is that it turns a ranked list into a \emph{certificate}: for a chosen $\alpha$, the auditor can interpret $|C_\alpha(x)|$ as the (calibrated) residual uncertainty remaining after the attack.
This supports concrete release policies such as ``withhold or further redact documents with median $|C_\alpha|<k$ under attacker configuration $w$,'' and encourages tail-focused auditing of the smallest, highest-risk certified sets first.

\section{Related Work}

\label{sec:related}

\paragraph{Differential privacy for training and fine-tuning language models.} Differential privacy (DP) provides a formal stability-based guarantee for randomized algorithms \citep{dwork2006calibrating}, with DP-SGD \citep{abadi2016deep} as the standard training approach, adapted to NLP fine-tuning and generation \citep{yu2022dpfinetuning,li2022strongdp}. These works enforce a mechanism-level guarantee; in contrast, we audit \emph{task-level} re-identification risk under a declared threat model, with finite-sample guarantees derived from split conformal calibration rather than from mechanism design or calibrated noise injection.
\paragraph{Unintended memorization and data extraction attacks.} Language models can memorize and regurgitate training data or sensitive spans, enabling extraction-style attacks \citep{carlini2019secret,carlini2021extracting,carlini2022quantifying}, which motivates attack-based evaluation protocols such as membership inference and canary tests \citep{shokri2017membership,yeom2018privacy}. Our approach is compatible with these attack definitions: any attack score can serve as the underlying nonconformity signal.
\paragraph{Privacy auditing and post hoc evaluation.} A parallel literature audits trained models to estimate or falsify privacy claims, including auditing of DP pipelines \citep{jagielski2020auditing,steinke2024privacyauditing}. \citet{hu2025empiricalprivacyvariance} suggest that recovering a single scalar ``privacy number'' may be brittle across settings. Our method instead outputs a \emph{set-valued certificate} with a coverage guarantee under a declared threat family.
\paragraph{Text anonymization benchmarks and re-identification.} The Text Anonymization Benchmark (TAB) provides a corpus and evaluation framework for reducing disclosure risk in released text \citep{pilan2022tab}, and recent work argues anonymization should be assessed via de-anonymization experiments with realistic adversaries \citep{deuber2023assessing}. We complement these efforts with a distribution-free \emph{risk-control layer}: given a threat model and attacker pipeline, the reported ambiguity sets have guaranteed finite-sample coverage on held-out data.
\paragraph{Statistical disclosure control and record linkage.}
Release-time disclosure risk has a long history in statistical disclosure control (SDC) and record linkage: probabilistic record linkage scores matches against a candidate list \citep{fellegi1969theory}; disclosure-risk estimation quantifies per-record re-identification probability \citep{lambert1993measures,reiter2005estimating,hundepool2012statistical}; and $k$-anonymity bounds risk via a minimum cohort of indistinguishable records \citep{sweeney2002kanonymity}. CPA shares this candidate-list view, and its certified ambiguity set can be read as a calibrated, per-document analogue of a $k$-anonymity cohort under a \emph{modern, declared} attacker. Unlike classical SDC, which targets structured microdata and model-based population-frequency assumptions, CPA operates on unstructured text attacked by retrieval- and LLM-based pipelines and replaces model-based risk estimates with a distribution-free, finite-sample set-valued coverage guarantee. CPA does not replace classical SDC analysis; it wraps a declared modern attacker pipeline with a statistical certificate that classical disclosure-risk metrics by themselves do not provide for released text.

\paragraph{Conformal prediction and risk control.}
Conformal prediction provides finite-sample, distribution-free uncertainty quantification under exchangeability \citep{vovk2005algorithmic}, with adaptive prediction sets controlling coverage while adjusting set size to difficulty \citep{romano2020classification} and conformal \emph{risk control} extending to generic risks \citep{bates2021riskcontrol}. CPA adapts this principle to privacy: the risk is re-identification under a declared threat model and the output is certified attacker ambiguity instead of a point estimate.

\section{Conclusion}
\label{sec:conclusion}
We studied \emph{release-time} privacy risk for released text against black-box attackers that combine LLMs with auxiliary knowledge for identity linkage. Our framework, \emph{Conformal Privacy Auditing} (CPA), wraps arbitrary attacker pipelines into a \emph{finite-sample privacy certificate}, valid in both logit-access and sampling-only settings and thus applicable to open-source and proprietary API models alike. Empirically, CPA delivers calibrated coverage across benchmarks and shows that privacy risk is heterogeneous---a release that looks safe under limited side information can become highly identifiable under richer registries or stronger reasoning---so evaluation should favor calibrated, threat-model-specific certificates over fragile point estimates.

\section*{Limitations}
CPA is an \emph{auditing} framework, not a cryptographic privacy definition.
Its guarantees hold under the declared threat model: a finite candidate pool and an attacker pipeline
family specified by the auditor, together with an exchangeability assumption between calibration and
evaluation examples within each configuration.
In particular, CPA does not assume the auditor can enumerate all future attackers, and it provides
\emph{no} universal protection against unknown or unanticipated attacks---the history of linkage
attacks shows that even expert practitioners underestimate attacker ingenuity. Each certificate is
valid only for the declared candidate pool, attacker pipeline, knowledge level, and calibration/test
distribution; we therefore recommend running CPA across multiple plausible attacker families and
recalibrating whenever attacker tooling, external knowledge, or release mechanisms change. Our drift
stress tests (Section~\ref{sec:results-robustness}) quantify how coverage degrades when this protocol
is not followed.
If the candidate pool omits the true identity (open-world miss) or if deployment data drift violates
exchangeability, coverage can degrade; we therefore report candidate-miss rates together with
conditional and unconditional coverage, and recommend recalibration when the release distribution
changes (e.g., after rewriting). More robust calibration variants (e.g., Mondrian or weighted
conformal prediction under covariate shift) are possible when the drift model is explicit, but they
require additional assumptions and do not provide distribution-free transfer under arbitrary prompt
or model drift.
Finally, this work focuses empirically on identity linkage; the same calibration layer applies to any
attack that yields a scalar score or ranked candidate set, and extending CPA to membership inference
is a natural direction for future work. These limitations are not
unique to CPA but reflect the fundamental challenge of giving operationally meaningful, release-time
guarantees for natural language under evolving adversaries.

% \paragraph{Outlook.}
% A promising direction is to strengthen robustness to distribution shift and adaptive attacker behavior,
% e.g., via group-conditional calibration, shift-aware conformal methods, or online recalibration.
% Another direction is to use CPA as a decision layer for release policies: selecting redaction/rewriting
% strength to control corpus-level identifiability via conformal risk control while preserving utility.
% Overall, CPA provides a practical, statistically principled foundation for auditing identity-linkage
% risk of released text in an era where LLMs increasingly function as capable privacy adversaries.

\bibliography{example_paper}

%%%%%%%%%%%%%%%%%%%%%%%%%%%%%%%%%%%%%%%%%%%%%%%%%%%%%%%%%%%%%%%%%%%%%%%%%%%%%%%
%%%%%%%%%%%%%%%%%%%%%%%%%%%%%%%%%%%%%%%%%%%%%%%%%%%%%%%%%%%%%%%%%%%%%%%%%%%%%%%
% APPENDIX
%%%%%%%%%%%%%%%%%%%%%%%%%%%%%%%%%%%%%%%%%%%%%%%%%%%%%%%%%%%%%%%%%%%%%%%%%%%%%%%
%%%%%%%%%%%%%%%%%%%%%%%%%%%%%%%%%%%%%%%%%%%%%%%%%%%%%%%%%%%%%%%%%%%%%%%%%%%%%%%
\newpage
\appendix

\section{Appendix}
% \input{tab/tab::datasetsummary}

% \subsection{candidate profile construction clarifies the threat model and makes results interpretable across datasets.}
% Table~\ref{tab:datasets} summarizes the identity unit, document scale, and—critically—the \emph{candidate profile schema} used to instantiate attacker side information across datasets. This table is essential for interpreting CPA outputs because certified ambiguity is only meaningful relative to the candidate pool and the fields made available to the attacker. By making profile construction explicit (e.g., metadata vs keyword profiles; held-out posts vs released posts; infobox fields vs extracted keywords), we ensure that differences in ambiguity sets across datasets reflect meaningful changes in attacker knowledge rather than accidental implementation choices. This also provides a transparent checklist for future audits: practitioners can map their own auxiliary knowledge assumptions into a schema and reproduce the CPA procedure.

\subsection{CPA Algorithm}
\label{app:alg1}
The full CPA procedure is shown in Algorithm~\ref{alg:cpa}.
\begin{algorithm}[ht]
\caption{Split Conformal Privacy Auditing (CPA)}
\label{alg:cpa}
\begin{algorithmic}[1]
\REQUIRE Calibration set $\mathcal{D}_{\mathrm{cal}}=\{(X_i,Y_i)\}_{i=1}^n$ under fixed knowledge level $w$;
candidate-pool generator $\cY(\cdot,w)$; attacker pipeline $A$; miscoverage level $\alpha\in(0,1)$;
sampling budget $m$ (used only in sampling-only access).
\ENSURE Conformal threshold $\hat q_\alpha$ and ambiguity-set function $C_\alpha(\cdot,w)$.

\vspace{2pt}
\STATE \textbf{Score function (APS mass).} For any $(x,y)$ define
\[
s(x,y;w)=\sum_{\substack{y'\in \cY(x,w):\\ p_A(y'\mid x,w)\ge p_A(y\mid x,w)}} p_A(y'\mid x,w).
\]
\STATE \textbf{Calibration scores.}
\FOR{$i=1$ to $n$}
    \STATE Construct candidate pool $\cY_i \leftarrow \cY(X_i,w)$.
    \STATE Obtain a distribution $\tilde p_i(\cdot)\approx p_A(\cdot\mid X_i,w)$ over $\cY_i$:
        \IF{logits/logprobs available}
            \STATE $\tilde p_i \leftarrow p_A(\cdot\mid X_i,w)$ restricted/normalized to $\cY_i$.
        \ELSE[sampling-only access]
            \STATE Draw $\tilde Y_i^{(1)},\ldots,\tilde Y_i^{(m)} \sim A(\cdot\mid X_i,w)$ over $\cY_i$.
            \STATE Set $\tilde p_i(y)\leftarrow \frac{1}{m}\sum_{j=1}^m \mathbf{1}\{\tilde Y_i^{(j)}=y\}$ for $y\in\cY_i$.
        \ENDIF
    \STATE Compute calibration score $T_i \leftarrow \sum_{y'\in\cY_i:\,\tilde p_i(y')\ge \tilde p_i(Y_i)} \tilde p_i(y')$.
\ENDFOR

\STATE Sort $T_{(1)}\le \cdots \le T_{(n)}$ and set
\[
k \leftarrow \left\lceil (n+1)(1-\alpha)\right\rceil,\qquad \hat q_\alpha \leftarrow T_{(k)}.
\]

\STATE \textbf{Define ambiguity set for a new document $(x,w)$.}
\STATE Construct $\cY(x,w)$ and obtain $\tilde p(\cdot)\approx p_A(\cdot\mid x,w)$ as above.
\STATE Compute plug-in scores $\tilde s(y)$ for all $y\in\cY(x,w)$ as in Step 1 (with $\tilde p$ in place of $p_A$), and output
\[
C_\alpha(x,w)\leftarrow \big\{y\in \cY(x,w): \tilde s(y) \le \hat q_\alpha \big\}.
\]
\STATE \textbf{leakage proxy:} $L(x,w)\leftarrow 1/\max\{1,|C_\alpha(x,w)|\}$.
\end{algorithmic}
\end{algorithm}

\section{Dataset Statistics}
\label{app:dataset_statistics}

Table~\ref{tab:dataset_statistics} summarizes the datasets used in our privacy-auditing experiments, together with their split statistics.
\begin{table*}[ht]
\centering
\small
\resizebox{\textwidth}{!}{%
\begin{tabular}{lrrrr}
\toprule
Dataset & Unit & Total & Split / subset & Avg. words \\
\midrule
TAB (ECHR) & documents & 1{,}268 & train/dev/test = 1014/127/127 & train 1343.4; dev 874.6; test 843.4  \\
TextWash & biographies & 1{,}202 orig + 1{,}202 anon & famous 401, semifamous 400, fiction/other 401 & orig 130.5; anon 127.7  \\
Blog Authorship & authors & 19{,}320 & balanced male/female author roster & 9.74 posts/author  \\
WikiBio & biographies & 1{,}000 & one category in the prepared audit slice & orig 102.7; anon 101.9  \\
\bottomrule
\end{tabular}%
}
\caption{Dataset statistics for the release-audit experiments. For TAB we report split sizes from the benchmark JSON files; for TextWash we report file counts from the original and anonymized person-description trees; for Blog we report author-level XML statistics; and for WikiBio we report the prepared audit slice used by the canonical experiment runner in all reported WikiBio suites and ablations.}
\label{tab:dataset_statistics}
\end{table*}

The TAB benchmark contains 1014 training documents, 127 development documents, and 127 test documents. These records contain long judicial narratives, which is reflected in the substantially larger average document length relative to the other datasets.

The TextWash benchmark contains 1202 original and 1202 released descriptions, with category counts of 401 famous, 400 semi-famous, and 401 fiction/other examples. Average document length is approximately 130 words before release and 128 words after release.

The Blog authorship corpus is stored as 19320 author-level XML files. Parsing those XML files yields an average of 9.74 posts per author and a median of 6 posts per author. This corpus is substantially larger in author count than the other datasets and is used to study how CPA behaves when the attacker registry is constructed from varying numbers of posts per author.

The WikiBio experiments use the prepared audit slice emitted by the dataset-preparation script. It contains 1000 released biographies with 1000 unique identities, and the matching profile database contains 1000 profile rows. The average original and released biography lengths are 102.7 and 101.9 words respectively, reflecting the lighter editing typical of the prepared WikiBio audit slice used in our experiments.

\subsection{Implementation Details}
\label{app:impl}

\paragraph{Retrieval attacker (non-LLM and LLM-assisted).}
In Retrieval-CPA, the attacker maps a released document $x$ to a probability distribution over a candidate pool by:
(i) building candidate profile texts for each identity in the auxiliary database,
(ii) embedding them with a bag-of-words TF--IDF vectorizer, and
(iii) ranking candidates by cosine similarity to a \emph{query string} derived from $x$.
We convert similarity scores to a pseudo-posterior distribution via a temperature-scaled softmax (parameter \texttt{--softmax\_temperature}).

\emph{Direct query} (\texttt{--query\_mode direct}) uses the released text itself as the query string.
\emph{LLM clues} (\texttt{--query\_mode direct\_plus\_clues}) uses an LLM to extract a short list of missing/high-value clues from the released text (optionally augmented by web search), and appends them to the query string. This is implemented in \texttt{adversary/clue\_extract.py} and parsed via \texttt{linkage/clues.py} within \texttt{privacy\_audit\_cp}.
In both cases, we can optionally restrict to a top-$K$ candidate pool (\texttt{--top\_k\_pool}) to reduce runtime while preserving a well-defined candidate set for the split conformal calibration performed at audit time.

\paragraph{Profile attacker (LLM attribute extraction + candidate filtering).}
In Profile-CPA, the LLM adversary does not see (and is not prompted with) the full candidate list.
Instead, it predicts attributes from the released text (e.g., gender, age-range, occupation, location, or dataset-specific fields) by sampling $m$ times per attribute. The sampled attribute values induce empirical distributions $\hat p_m(\cdot\mid x)$.
For each attribute, we conformalize these distributions to produce a certified set of plausible attribute values; the final identity ambiguity set is formed by filtering the auxiliary profile database to candidates consistent with \emph{all} certified attribute sets.

\paragraph{Conformal calibration and outputs.}
Both Retrieval-CPA and Profile-CPA use split conformal calibration:
we compute a scalar nonconformity score for the true identity (or true attribute value) on a calibration split and take an order statistic threshold $\hat q_\alpha$.
At test time we output an ambiguity set $C_\alpha(x)$ consisting of candidates whose score is below $\hat q_\alpha$.
Per run we report:
(i) empirical coverage, (ii) distribution of $|C_\alpha(x)|$ (median/mean/min/max),
(iii) \emph{leakage proxy} $\mathbb{E}[1/\max\{1,|C_\alpha(x)|\}]$, and
(iv) diagnostic rates such as empty-set and candidate-miss (when applicable).
For reproducibility and debugging, the runner scripts write \texttt{jsonl} files for both calibration and test splits containing the query string, top-$K$ ranked candidates, conformal scores, and the final ambiguity set for every audited document in both conformal splits.

\paragraph{Randomness and the $U_i$ variables.}
Whenever an attacker uses sampling (e.g., $m$ LLM generations for clue extraction or attribute extraction), the randomness is represented abstractly as a per-example seed/noise variable $U_i$.
In code, $U_i$ corresponds to the randomness consumed by stochastic decoding (and any parsing post-processing). Conditional on $(X_i,W_i)$ and the fixed prompt/decoding settings, the resulting score $T_i$ is a deterministic function of $(X_i,Y_i,W_i,U_i)$, which is exactly the condition needed by the exchangeability-based conformal validity argument in Section~\ref{sec:theory}.

\paragraph{Default hyperparameters.}
Unless otherwise stated, we use:
TF--IDF with \texttt{max\_features=50{,}000}, \texttt{ngram\_max=1}, and dataset-dependent \texttt{min\_df};
softmax temperature in $[0.5,2.0]$;
LLM decoding with \texttt{temperature=0.8}, \texttt{top\_p=0.95};
and $m\in\{5,10,30\}$ samples depending on the ablation.
We use fixed random seeds for dataset splits and sampling (whenever seeding is supported by the inference backend).

\paragraph{Blog Authorship: attacker side-information via held-out posts.}
Table~\ref{tab:blog} reports Retrieval-CPA on Blog Authorship under a controlled increase in attacker side information. 
Each released document is a post whose true identity is its author. 
To model auxiliary knowledge, we build a \emph{candidate database} of author profiles by concatenating $W$ held-out posts per author ($W\in\{1,3,10\}$), and run a retrieval attacker that ranks authors by text similarity between the released post and each author profile. 
We then apply split conformal calibration at $\alpha=0.05$ to convert the attacker ranking into a certified ambiguity set $C_\alpha(x)$.
As $W$ increases, the attacker becomes stronger: Top-1 accuracy rises (0.087 $\rightarrow$ 0.209) and the conformal threshold $\hat q_\alpha$ decreases (0.948 $\rightarrow$ 0.763), yielding smaller certified ambiguity sets (median $|C_\alpha|$ drops from 222 to 176). 
Importantly, empirical coverage remains near the nominal $1-\alpha\approx0.95$, indicating that the certificates remain valid while exposing how additional auxiliary text reduces \emph{certified} uncertainty.
Despite this trend, the certified sets remain large in absolute terms, suggesting substantial residual ambiguity in a large candidate pool even when the attacker has multiple posts per author at its disposal.

\begin{table*}[ht]
\centering
\caption{Retrieval-CPA on Blog Authorship under increasing attacker side information (number of posts per author; $\alpha=0.05$), with split conformal calibration over disjoint calibration/test partitions of the released posts; the full declared threat configuration is summarized in Table~\ref{tab:threat-config} in Appendix~\ref{app:threat-config}.}
\label{tab:blog}
\begin{tabular}{lrrrrrr}
\toprule
Attacker setting & Top-1 & Coverage & Median $|C_\alpha|$ & Mean $|C_\alpha|$ & $\mathbb{E}[1/|C_\alpha|]$ & $\hat q_\alpha$ \\
\midrule
W=1 posts/author & 0.087 & 0.946 & 222 & 200.50 & 0.00566 & 0.948 \\
W=3 posts/author & 0.143 & 0.956 & 210 & 191.82 & 0.00604 & 0.903 \\
W=10 posts/author & 0.209 & 0.946 & 176 & 147.14 & 0.00677 & 0.763 \\
\bottomrule
\end{tabular}
\end{table*}

\begin{table}[t]
\centering
\caption{\textbf{TextWash Retrieval-CPA under matched budgets.}
Retrieval-CPA on TextWash under direct retrieval vs.\ LLM-assisted clue extraction (\(\alpha=0.05\)).
All attackers are evaluated on the \emph{same} capped split with \(n_{\mathrm{cal}}=n_{\mathrm{test}}=100\) per category, so certified set sizes are directly comparable across attackers; LLM rows use cached clue outputs throughout.}
\label{tab:appendix:textwash-retrieval-cpa}
\small
\setlength{\tabcolsep}{4pt}
\begin{tabular}{lrrrr}
\toprule
Attacker & Top-1 & Cov. & Med.\ \(|C_{\alpha}|\) & Mean \(|C_{\alpha}|\) \\
\midrule
\multicolumn{5}{l}{\textbf{fiction}} \\
Direct query & 0.500 & 0.970 & 169.5 & 172.44 \\
GPT-5 clues  & 0.500 & 0.970 & 169.5 & 172.44 \\
Llama clues  & 0.600 & 0.970 & 171.0 & 174.26 \\
\midrule
\multicolumn{5}{l}{\textbf{semifamous}} \\
Direct query & 0.380 & 0.910 & 155.0 & 167.12 \\
GPT-5 clues  & 0.380 & 0.910 & 155.0 & 167.12 \\
\midrule
\multicolumn{5}{l}{\textbf{famous}} \\
Direct query & 0.470 & 0.990 & 120.0 & 119.88 \\
Llama clues  & 0.510 & 0.960 &  90.0 &  89.70 \\
\bottomrule
\end{tabular}

\vspace{2pt}
\caption*{Smaller \(|C_{\alpha}|\) implies a more identifying attack (less ambiguity). Coverage should be near \(1-\alpha=0.95\) under exchangeability, up to finite-sample fluctuation.}
\end{table}

\section{Notation}
\label{app:notation}
Table~\ref{tab:notation} consolidates the symbols used in Sections~\ref{sec:setting}--\ref{sec:theory}.

\begin{table}[ht]
\centering
\small
\setlength{\tabcolsep}{3pt}
\begin{tabular}{l p{4.8cm}}
\toprule
Symbol & Meaning \\
\midrule
$x, X$ & Released document (possibly anonymized/rewritten) \\
$y, Y$ & True identity of the document subject \\
$\cU$ & Reference population of identities \\
$w, W$ & Attacker knowledge level / threat configuration (side information and tooling) \\
$\cY(x,w)$ & Finite candidate pool available to the attacker \\
$K$ & Top-$K$ truncation level of the candidate pool (engineering choice) \\
$p_A(\cdot\mid x,w)$ & Attacker's distribution (or normalized ranking) over $\cY(x,w)$ \\
$m$ & Per-document sampling budget in the sampling-only interface \\
$\hat p_m$ & Empirical attacker distribution from $m$ forced-choice samples \\
$s(x,y;w)$, $\hat s_m$ & APS mass nonconformity score (exact / plug-in) \\
$T_i$ & Scalar calibration score of example $i$ \\
$\alpha$ & Target miscoverage level \\
$\hat q_\alpha$ & Split-conformal threshold ($\lceil(n+1)(1-\alpha)\rceil$-th order statistic) \\
$n_{\mathrm{cal}}, n_{\mathrm{test}}$ & Calibration / test split sizes \\
$C_\alpha(x,w)$ & Certified ambiguity set (primary certificate) \\
$L(x,w)$ & Inverse ambiguity proxy $1/\max\{1,|C_\alpha(x,w)|\}$ (not a re-identification probability) \\
$\rho(w)$ & Candidate-miss probability $\Pr(Y\notin\cY(X,w))$ \\
$U_i$ & Attacker-internal randomness (sampling, decoding) for example $i$ \\
\bottomrule
\end{tabular}
\caption{Notation used throughout the paper.}
\label{tab:notation}
\end{table}

\section{Threat-Model Configurations per Experimental Suite}
\label{app:threat-config}
Table~\ref{tab:knowledge} summarizes the knowledge ladder used to instantiate attacker side information, and Table~\ref{tab:threat-config} maps each experimental suite to its declared threat configuration (Section~\ref{sec:threat-model}): candidate-pool construction, query mode, LLM/web augmentation, and closed- vs.\ open-world assumption. The conformal split sizes and sampling budgets of the matched and ablation suites are stated in the corresponding table captions (Tables~\ref{tab:appendix:textwash-retrieval-cpa}, \ref{tab:ncal-ablation}, and~\ref{tab:m-ablation}). Each CPA certificate is valid for exactly one row of this table; changing any entry defines a new attacker configuration that requires recalibration.
\begin{table}[t]
\centering
\small
\caption{Knowledge ladder (side information \(W\)) used to instantiate attacker pipelines in our experiments.}
\label{tab:knowledge}
\setlength{\tabcolsep}{3pt}
\begin{tabular}{l p{5.2cm}}
\toprule
Level & Attacker access / capability \\
\midrule
K1 (weak) &
Profiles contain only coarse structured fields (e.g., occupation bucket, coarse location, age range). Retrieval query uses released text directly. \\
K2 (medium) &
Profiles add additional non-identifying evidence facts (e.g., coarsened events/keywords). Retrieval uses released text; no LLM assistance. \\
K3 (strong) &
LLM-assisted clue extraction from released text produces a query expansion (optional), and/or profile is augmented (RAG/web search). \\
\bottomrule
\end{tabular}
\end{table}

\begin{table*}[ht]
\centering
\scriptsize
\setlength{\tabcolsep}{3pt}
\begin{tabular}{lllll}
\toprule
Suite & Candidate pool (size) & Query mode & LLM/web aug. & World \\
\midrule
TAB K1--K3 & case records (1268) & direct & none & closed \\
TAB K2/K3 + clues & case records (1268) & direct+clues & LLM clues & closed \\
WikiBio direct & profile rows (1000) & direct & none & closed \\
WikiBio + clues & top-$K{=}200$ of 1000 & direct+clues & LLM clues & top-$K$ truncated \\
Blog W$\in\{1,3,10\}$ & author registry & direct & none & closed \\
TextWash direct (matched) & category ($\approx$400) & direct & none & closed \\
TextWash + clues (matched) & category ($\approx$400) & direct+clues & LLM clues & closed \\
TextWash $m$-ablation & category ($\approx$400) & direct+clues & LLM clues (local HF) & closed \\
\bottomrule
\end{tabular}
\caption{Declared threat configuration for each experimental suite. ``direct'' uses the released text as the retrieval query; ``direct+clues'' appends LLM-extracted clues. The matched TextWash suites use $n_{\mathrm{cal}}{=}n_{\mathrm{test}}{=}100$ per category (Table~\ref{tab:appendix:textwash-retrieval-cpa}), and the $m$-ablation sweeps the per-document sampling budget $m\in\{1,3,5,10\}$ (Table~\ref{tab:m-ablation}).}
\label{tab:threat-config}
\end{table*}

\section{Robustness and Ablation Studies}
\label{app:robustness}
This section reports the stress tests and ablations summarized in Section~\ref{sec:results-robustness}, together with the $\alpha$-sweep visualization for the TAB quasi-identifier suite (Figure~\ref{fig:alpha_effect}).

\begin{figure*}[t]
  \centering
  \includegraphics[width=0.32\linewidth]{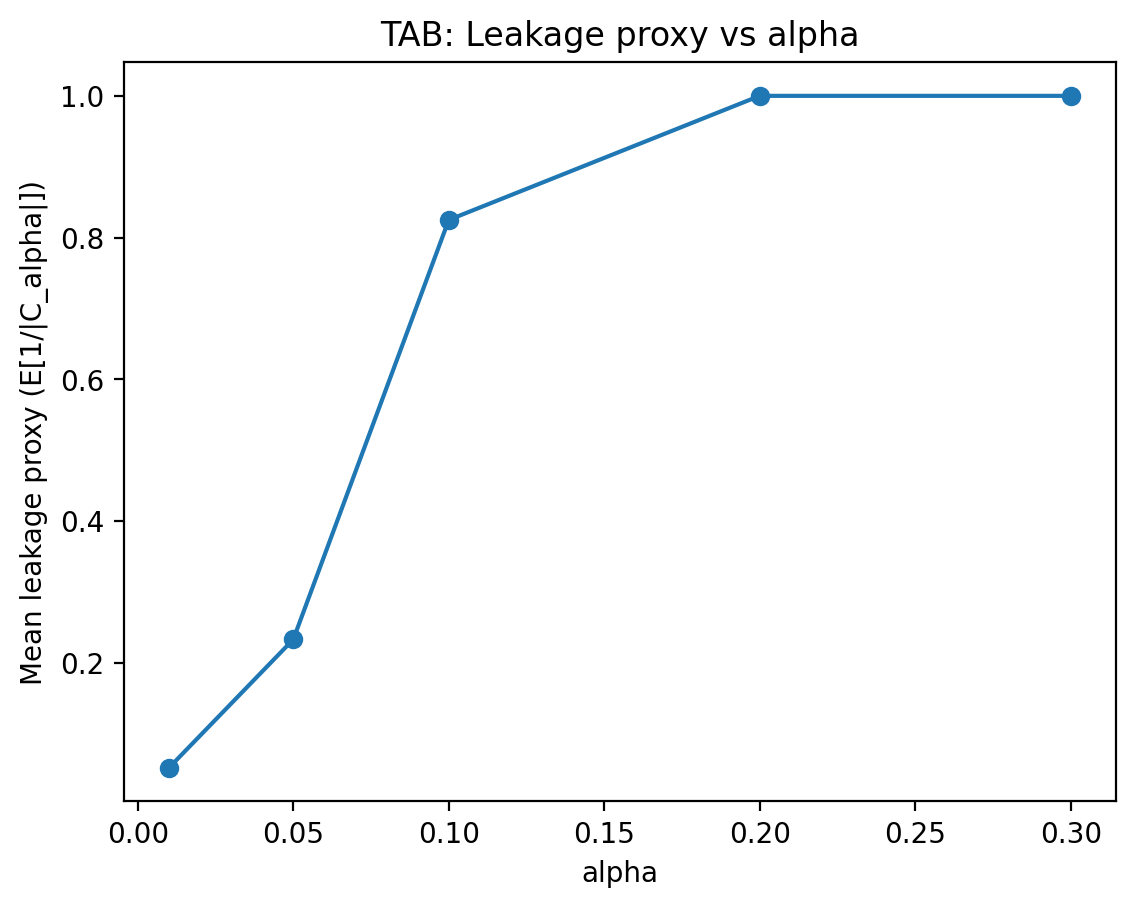}
  \includegraphics[width=0.32\linewidth]{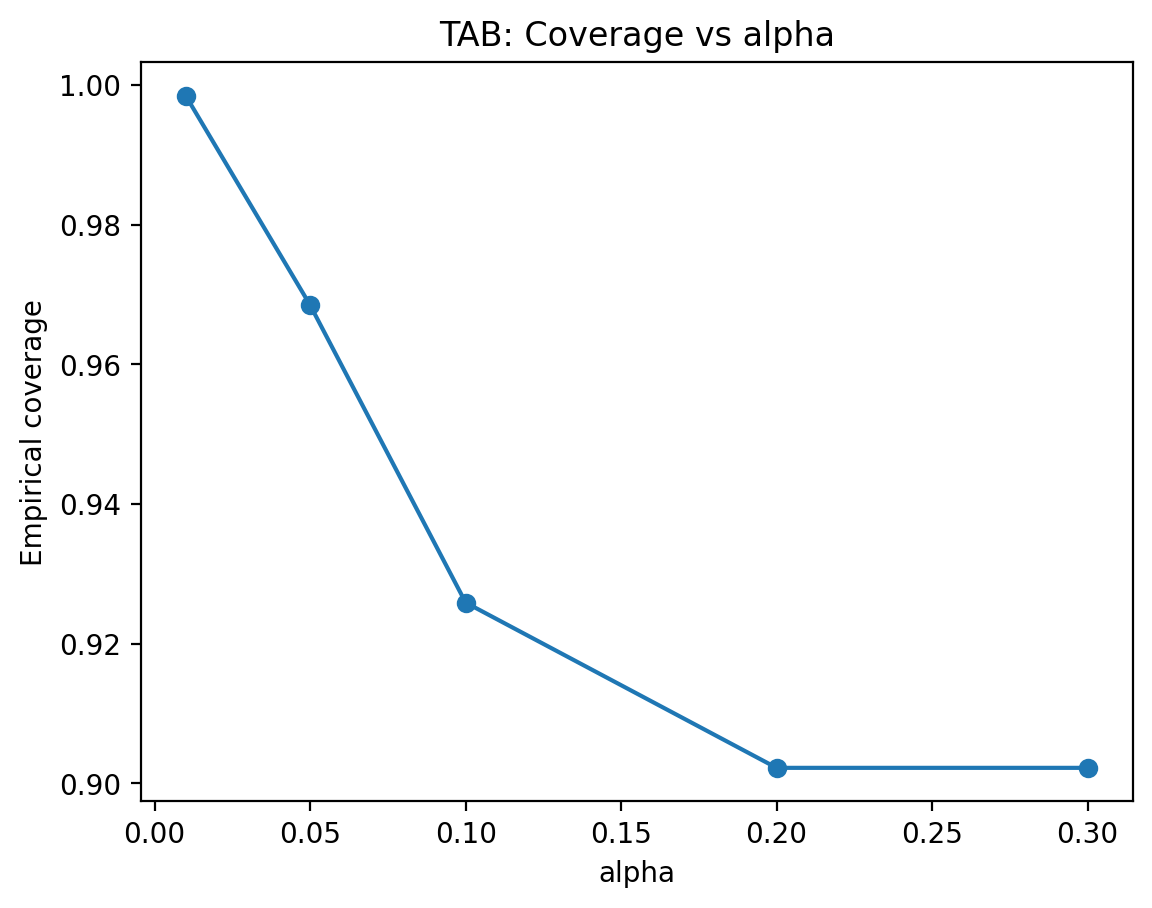}
  \includegraphics[width=0.32\linewidth]{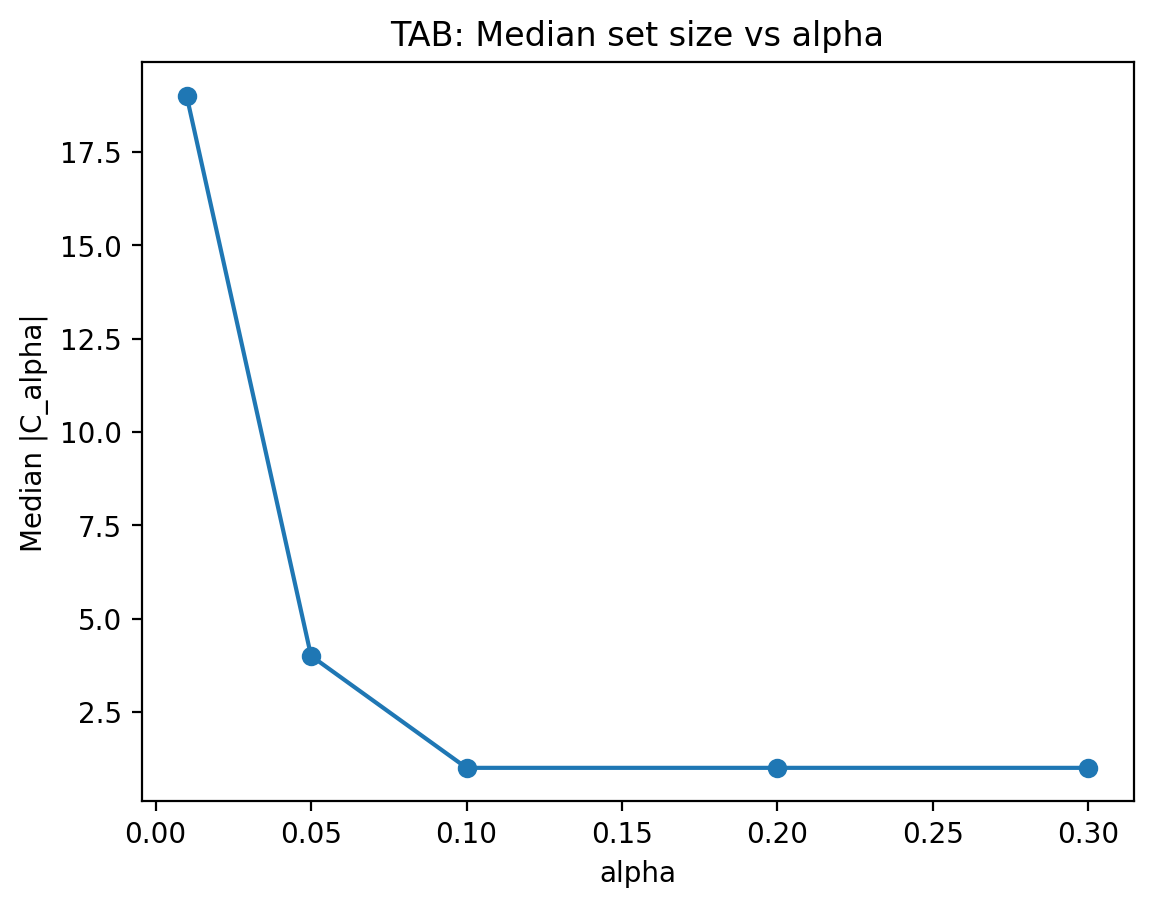}
  \caption{Effect of the miscoverage level $\alpha$ on TAB (direct retrieval, quasi-identifier release): mean leakage proxy $\mathbb{E}[1/|C_\alpha|]$ (left), empirical coverage (center), and median certified set size $|C_\alpha|$ (right). Smaller $\alpha$ certifies residual ambiguity more conservatively; larger $\alpha$ yields smaller sets with weaker statistical protection.}
  \label{fig:alpha_effect}
\end{figure*}

\subsection{Exchangeability and Attacker/Release Drift}
Table~\ref{tab:shift-stress} calibrates CPA under one configuration and evaluates it under another. For Blog, $W{=}k$ denotes an attacker registry built from $k$ known posts per author, so W10 is a stronger author-profile registry than W1. Matched configurations meet the $1-\alpha=0.95$ target, while shifted configurations fall below target, confirming that certificates do not transfer across attacker or release drift and must be recalibrated.

\begin{table}[ht]
\centering
\small
\setlength{\tabcolsep}{3pt}
\begin{tabular}{lrrr}
\toprule
Cal.\ $\rightarrow$ test & Cov. & Med.\ $|C_\alpha|$ & Mean $|C_\alpha|$ \\
\midrule
TAB direct $\rightarrow$ direct & 0.965 & 1 & 1.000 \\
TAB direct $\rightarrow$ quasi-ID & 0.923 & 1 & 1.000 \\
TAB quasi-ID $\rightarrow$ quasi-ID & 0.965 & 3 & 2.658 \\
Blog W10 $\rightarrow$ W10 & 0.940 & 176 & 175.691 \\
Blog W10 $\rightarrow$ W1 & 0.835 & 176 & 175.224 \\
\bottomrule
\end{tabular}
\caption{Exchangeability stress tests ($\alpha=0.05$). Matched calibration/test configurations achieve target coverage; shifted configurations under-cover, so recalibration is the operational protocol under suspected attacker-side or release-side drift.}
\label{tab:shift-stress}
\end{table}

\subsection{Candidate-Pool Miss and Top-$K$ Truncation}
Table~\ref{tab:topk-stress} sweeps the top-$K$ truncation level, reporting the candidate-miss rate, coverage conditional on the true identity being retained in the pool, and unconditional coverage. Conditional coverage stays at or above target whenever the true identity is retained, while unconditional coverage degrades with the miss rate, exactly as predicted by Corollary~\ref{cor:open-world}. This separates conformal miscoverage from candidate-pool construction failure: top-$K$ truncation is an engineering choice, and closed-world validity applies conditional on candidate inclusion in the truncated candidate pool.

\begin{table}[ht]
\centering
\small
\setlength{\tabcolsep}{3pt}
\begin{tabular}{lrrrrr}
\toprule
Suite & $K$ & Miss & Cond. & Uncond. & Med. \\
\midrule
TW fiction direct & 50 & 0.160 & 1.000 & 0.840 & 50 \\
TW fiction direct & 100 & 0.080 & 1.000 & 0.920 & 100 \\
TW fiction direct & 200 & 0.000 & 0.970 & 0.970 & 169.5 \\
TW famous direct & 50 & 0.110 & 1.000 & 0.890 & 50 \\
TW famous direct & 100 & 0.030 & 1.000 & 0.970 & 100 \\
WB + Llama clues & 50 & 0.118 & 1.000 & 0.882 & 50 \\
WB + Llama clues & 200 & 0.044 & 0.994 & 0.950 & 182 \\
\bottomrule
\end{tabular}
\caption{Candidate-pool / top-$K$ truncation stress tests ($\alpha=0.05$). TW = TextWash, WB = WikiBio. ``Miss'' is the empirical candidate-miss rate; ``Cond.''/``Uncond.'' are coverage conditional on the true identity being retained in the truncated pool and unconditional coverage; ``Med.'' is the median certified set size $|C_\alpha|$ over the evaluated test split.}
\label{tab:topk-stress}
\end{table}

\subsection{Alternative Score Constructors}
Table~\ref{tab:score-ablation} compares nonconformity-score constructors at matched coverage over cached LLM-clue queries (no new API calls): the APS mass score, a rank-based score, and the raw attacker probability. Multiple constructors achieve valid coverage, confirming that the CPA calibration layer is not tied to a single score; APS is a robust default rather than uniformly dominant, and the rank-based variant extends CPA to attackers that return only ranked candidate lists without scores.

\begin{table}[ht]
\centering
\small
\setlength{\tabcolsep}{3pt}
\begin{tabular}{llrrr}
\toprule
Attacker & Score & Cov. & Med. & Mean \\
\midrule
WikiBio + Llama & APS & 0.950 & 189 & 186.870 \\
WikiBio + Llama & rank & 0.950 & 179 & 179.000 \\
WikiBio + Llama & prob. & 0.960 & 212 & 209.434 \\
Blog W10 + Llama & APS & 0.950 & 8 & 8.341 \\
Blog W10 + Llama & rank & 0.962 & 9 & 9.000 \\
TAB K3 + clues & APS & 0.967 & 31 & 30.803 \\
TAB K3 + clues & prob. & 0.951 & 19 & 24.465 \\
\bottomrule
\end{tabular}
\caption{Score-constructor ablation for cached LLM-clue attackers ($\alpha=0.05$). ``Med.''/``Mean'' refer to $|C_\alpha|$. All constructors achieve valid coverage; efficiency (certified set size) varies moderately across constructors and across the different attacker pipelines.}
\label{tab:score-ablation}
\end{table}

\subsection{Calibration-Set Size}
Table~\ref{tab:ncal-ablation} varies the calibration size $n_{\mathrm{cal}}$ on matched TextWash direct retrieval ($\alpha=0.05$, $n_{\mathrm{test}}=100$ per category). Coverage remains near target while the certified sets tighten with more calibration data; per-category behavior (e.g., famous coverage $0.980\rightarrow0.940$ with median $124\rightarrow97$ as $n_{\mathrm{cal}}$ grows from 50 to 150, while fiction remains at $0.980/0.970/0.980$) shows the expected finite-sample variability of the order-statistic threshold. Because $\hat q_\alpha$ depends on $n_{\mathrm{cal}}$ through the split-conformal order statistic, the split sizes are part of the declared threat configuration and are stated with every matched comparison and ablation.

\begin{table}[ht]
\centering
\small
\begin{tabular}{lrr}
\toprule
$n_{\mathrm{cal}}$ & Avg.\ coverage & Avg.\ median $|C_\alpha|$ \\
\midrule
50 & 0.962 & 151.2 \\
100 & 0.957 & 148.2 \\
150 & 0.927 & 140.3 \\
\bottomrule
\end{tabular}
\caption{Calibration-set-size ablation on matched TextWash direct retrieval ($\alpha=0.05$, $n_{\mathrm{test}}=100$), averaged over the three TextWash categories.}
\label{tab:ncal-ablation}
\end{table}

\subsection{Sampling Budget and Split Stability}
Table~\ref{tab:m-ablation} reports a local Llama-3.1-8B-Instruct clue-extraction ablation on TextWash famous (direct-plus-clues, $n_{\mathrm{cal}}=n_{\mathrm{test}}=100$, candidate-miss rate $0$) over the per-document sampling budget $m$. Increasing $m$ from $1$ to $10$ preserves above-target coverage while reducing the median certified set size from $120$ to $78$: the sampling budget affects efficiency after recalibration, not validity.
Across three random calibration/test splits of TextWash direct retrieval (nine category-by-seed rows), coverage is $0.951\pm0.037$ with median $|C_\alpha|$ $143.3\pm33.6$ (mean $146.5\pm35.2$), indicating stability to the split randomness.
We treat prompt, model, or decoding changes as attacker-configuration drift requiring recalibration (Table~\ref{tab:shift-stress}); we do not compare or interpret certificates across such configuration changes.

\begin{table}[ht]
\centering
\small
\begin{tabular}{rrrrr}
\toprule
$m$ & Top-1 & Cov. & Med.\ $|C_\alpha|$ & Mean $|C_\alpha|$ \\
\midrule
1 & 0.490 & 0.980 & 120 & 120.020 \\
3 & 0.510 & 0.970 & 83 & 82.590 \\
5 & 0.510 & 0.970 & 82 & 82.110 \\
10 & 0.490 & 0.970 & 78 & 78.060 \\
\bottomrule
\end{tabular}
\caption{Sampling-budget ablation with Llama-3.1-8B-Instruct on TextWash famous ($\alpha=0.05$, $n_{\mathrm{cal}}=n_{\mathrm{test}}=100$, candidate-miss rate $0$). Coverage stays above target while larger budgets yield tighter certified sets at the same nominal coverage level.}
\label{tab:m-ablation}
\end{table}
%%%%%%%%%%%%%%%%%%%%%%%%%%%%%%%%%%%%%%%%%%%%%%%%%%%%%%%%%%%%%%%%%%%%%%%%%%%%%%%
%%%%%%%%%%%%%%%%%%%%%%%%%%%%%%%%%%%%%%%%%%%%%%%%%%%%%%%%%%%%%%%%%%%%%%%%%%%%%%%

\end{document}